\subjclass[2010]{11T71, 94A60}
\newtheorem{thm}{Theorem}[section]
\newtheorem{lem}[thm]{Lemma}
\newtheorem{prop}[thm]{Proposition}
\theoremstyle{definition}
\newtheorem{defn}[thm]{Definition}
\theoremstyle{remark}
\newtheorem{rem}[thm]{Remark}
\theoremstyle{examples}
\newtheorem{ex}[thm]{Example}
\numberwithin{equation}{section}
\newcommand{\Tr}{\operatorname{Tr}}
\title[Trace-based cryptanalysis of PLWE\ldots]%
{Cryptanalysis of PLWE based on\\zero-trace quadratic roots}
\author[B. Barbero]{Beatriz Barbero-Lucas}
\author[I. Blanco]{Iv\'an Blanco-Chac\'on}
\author[R. Dur\'an]{Ra\'ul Dur\'an-D\'{\i}az}
\author[R. Mart\'{\i}n]{Rodrigo Mart\'{\i}n S\'anchez-Ledesma}
\author[R. Y. N. Nchiwo]{Rahinatou Yuh Njah Nchiwo}
\address{School of Mathematics and Statistics, University College Dublin,
Ireland;
Department of Physics and Mathematics, University of Alcal\'a, Spain;
Polytechnic School, University of Alcal\'a;
Departamento de \'Algebra, Universidad Complutense de Madrid, Spain, and
Indra Sistemas de Comunicaciones Seguras, Spain;
Aalto University School of Science, Finland}
\email{beatriz.barberolucas@ucdconnect.ie; ivan.blancoc@uah.es;
raul.duran@uah.es; rodrma01@ucm.es,
rmsanchezledesma@indra.es; rahinatou.njah@aalto.fi}
\thanks{B. Barbero-Lucas is partially supported by the University of Alcal\'a
grant CCG20/IA-057. I. Blanco-Chac\'on is partially supported by the Spanish
National Research Plan, grant n. MTM2016-79400-P, by grant PID2019-104855RBI00,
funded by MCIN / AEI / 10.13039 / 501100011033 and by the University of
Alcal\'a grant CCG20/IA-057. R. Dur\'an-D\'iaz is partially supported by grant
P2QProMeTe (PID2020-112586RB-I00), funded by MCIN / AEI / 10.13039 /
501100011033. R.Y. Njah Nchiwo is supported by a PhD scholarship from the Magnus
Ehrnrooth Foundation, Finland, in part by Academy of Finland, grant 351271 (PI:
C. Hollanti) and in part by MATINE Finnish Ministry of Defence, grant
\#2500M0147 (PI: C. Hollanti).}
\begin{document}
\def\figurename{Algorithm}
\pagestyle{myheadings} \markboth{\evenhead}{\oddhead}
\thispagestyle{empty}

\begin{abstract}We extend two of the attacks on the PLWE problem presented in
\cite{ELOS:2016:RCN} to a ring $R_q=\mathbb{F}_q[x]/(f(x))$ where the
irreducible monic polynomial $f(x)\in\mathbb{Z}[x]$ has an irreducible quadratic
factor over
$\mathbb{F}_q[x]$ of the form $x^2+\rho$ with $\rho$ of suitable multiplicative
order in $\mathbb{F}_q$. Our attack exploits the fact that the trace of the root
is zero and has overwhelming success probability as a function of the number of
samples taken as input. An implementation in Maple and some examples of our
attack are also provided.
\end{abstract}

\keywords{Lattice-based cryptography, algebraic cryptanalysis, orders in finite fields, traces over finite fields}
\maketitle

\section{Introduction}

Lattice-based cryptography has achieved an extraordinary success in the last
years, especially since the National Institute for Standards and Technology
launched a public contest in 2017 to standardise different sets of
quantum-resistant primitives. In 2022, four candidates were proposed for standardisation,
three of them belonging to the lattice-based category (Crystals-Kyber for
Encryption and Key Encapsulation, Crystals-Dilithium and Falcon for Digital
Signatures). These suites are based on several \emph{learning problems} over
certain large-dimensional vector spaces over finite fields, some of them with an
extra ring structure.

In the present work, we are concerned in particular with the Polynomial Learning
With Errors problem (PLWE from now on), which was introduced in
\cite{SSTX:2009:EPK}. To briefly describe it, let us denote by $R$ the quotient
ring $R:=\mathbb{Z}[x]/(f(x))$ with $f(x)\in\mathbb{Z}[x]$ of degree $N$, monic
and irreducible over $\mathbb{Z}[x]$, and let $q\geq 2$ be a prime. Set $R_q:=
\mathbb{F}_q[x]/(f(x))$ and consider an $R_q$-valued uniform random distribution
$U$ and an $R_q$-valued discretized Gaussian distribution $\chi_{R_q}$. The PLWE
problem in decision form, the version we investigate in this work, consists in
guessing, with non-negligible advantage, if a set of pairs $\{(a_i(x),b_i(x))\}
_{i\geq 1}\subseteq R_q\times R_q$ of arbitrary size has been sampled either
from $U\times U$ or from the PLWE distribution (see Definition~\ref{plwedistr}
for details).

In nearly all practical instances of (P)LWE-based cryptographic schemes, either in post-quantum cryptography or in
homomorphic encryption, the polynomial $f(x)$ defining  $R_q$ is considered to
be cyclotomic, due to the distinctive arithmetic properties enjoyed by these
fields, which are used in the security reduction proofs as well as in practical
implementations. For instance, the NIST standard Crystals-Kyber
(\cite{FIPS:2024:203}) is based on the Module Learning With Errors problem, a variant
of PLWE. In particular, its modulus polynomial is the $2^8$ cyclotomic polynomial. The third-round NIST candidate SABER
(\cite{saber}) also uses a variant of PLWE backed on the $2^8$-th cyclotomic polynomial. As
for homomorphic encryption standards, nearly all the schemes use PLWE moduli
polynomials that are either cyclotomic (mainly HELib \cite{helib}, LATTIGO
\cite{lattigo}, or OpenFHE \cite{openfhe}) or minimal polynomials of the maximal
totally real extension (the current implementation of LATTIGO allows to
choose this subfield). For a complete analysis of homomorphic encryption schemes
based on non-cyclotomic number fields and their practical performance we refer
the reader to \cite{PTGG:2021:RMR}.

However, some other works like \cite{PP:2019:ASL} and
\cite{RSW:2018:RPP} deal with more general families of non-cyclotomic number
fields and polynomials. Moreover, the problem is shown to enjoy a reduction from a supposedly
\emph{hard} problem for the family of ideal lattices whenever the prime $q$ is
of the form $q=\mathcal{O}(N^k)$ with $k\in\mathbb{N}$ independent of $N$ (for
instance, see Lemma~4.1 in \cite{SSTX:2009:EPK} for the power-of-two cyclotomic case).
As a matter of fact, in most applications $k$ is considered to be $2$ and this is
what we will also assume here.

The aforementioned ideal-lattice based problems are supposed to be intractable in
the worst case although some further investigation shows that the Shortest Vector
Problem might be more tractable for ideal lattices than for general ones (for
an up-to-date account on the ideal-lattice Shortest Vector Problem in the cyclotomic
case see \cite{DPW:2019:SVF}, while for more general families of ideal lattices
the reader is referred to \cite{BR:2020:TPU}).

Beyond worst-case reductions, some particular instantiations of PLWE have been
encountered to be insecure. For instance, in \cite{ELOS:2015:PWI}
and \cite{ELOS:2016:RCN}, the authors provide an attack against PLWE whenever a
simple root $\alpha\in\mathbb{F}_q$ of the polynomial $f(x)$ exists such that
either
\begin{enumerate}
\item[\emph{a)}] $\alpha=1$, or
\item[\emph{b)}] $\alpha$ has \emph{small order} modulo $q$, or
\item[\emph{c)}] $\alpha$ has \emph{small residue} modulo $q$, or
\item[\emph{d)}] none of the above situations holds but $\alpha$ and the variance of the error distribution are
such that one can distinguish the PLWE distribution from the uniform one.
\end{enumerate}
If $f(x)=\Phi_n(x)$, the $n$-th cyclotomic polynomial, it is easy to check that
$\alpha=\pm 1$ is never a root of $\Phi_n(x)$ modulo $q$ (unless $q\mid n$) and
all the roots of $\Phi_n(x)$ have maximal order. For instance, if $q\equiv
1\pmod{n}$ so that $q$ is totally split over the ring of integers of the
cyclotomic field, this order is precisely $n$. Hence any attack based on
conditions \emph{a)} or \emph{b)} does not apply to cyclotomic rings. It is however possible
to attack in polynomial time the PLWE problem restricted to a large enough
subring whenever a prime-power-degree cyclotomic polynomial splits in nonlinear
factors of suitable small degree (see \cite{BDNB:2023:TBC}). This earlier attack
addresses a situation not dealt with in \cite{ELOS:2015:PWI}, which assumes that
the modulus polynomial is totally split, but the assumption that the samples
must belong to a smaller subring imposes a strong limitation on the attack to
work.

The family of attacks based on case \emph{d)} has been extended in
\cite{BDM:2025:GARB}, \cite{BDM:GAR:2025} by the second and third author, jointly with R. Martín Sánchez-Ledesma. In that recent work, new sets of dangerous instances have been identified. In
addition, \cite{BDM:2025:GARB} elaborates on the results of the present work
and provides new attacks when the modulus polynomial admits a cubic or quartic
irreducible factor, whose roots have zero trace and a norm with \emph{small}
order. However, even if \cite{BDM:2025:GARB} has been published first, the present work provides foundational background which is omitted and taken for granted in \cite{BDM:2025:GARB}. In particular, in the present work we prove Proposition 3.5, which provide the complexity computation of our main algorithm. Likewise, we prove Lemma 3.6 and Lemma 3.7,  which reduce the analysis of traces of errors evaluated on roots in $\mathbb{F}_{q^2}$ to the identification of a smallness region in $\mathbb{F}_q$.

Further results on weak
instances for the PLWE problem can be found in \cite{Peikert:2016:HIR}, exploiting
algebraic and arithmetic vulnerabilities, and in
\cite{CLS:2017:ASR},\cite{Stange:2021:AAS}, exploiting a more statistical approach
(the so-called Chi-square attack) which generalises the attack for the case
\emph{c)} above.

The main contribution of the present work is to offer an extension of the attack
developed by \cite{ELOS:2015:PWI} to the case in which $f(x)$ has a zero-trace
root $\alpha\in\mathbb{F}_{q^2}\setminus\mathbb{F}_q$ such that the norm
$N_{\mathbb{F}_{q^2}\mid\mathbb{F}_q}(\alpha)$ has a \emph{suitable}
multiplicative order in relation with the rest of parameters in such a way that
the trace of the error polynomials evaluated at $\alpha$ belong to a region
whose cardinality is small enough in relation with $q$. This condition is far
less restricting than requiring the order of the norm to be very small.

Our attack proceeds by recursively evaluating an input set of samples in
$R_q\times R_q$ at the root $\alpha$. After each evaluation, we take the trace
of the result and check, as in
\cite{ELOS:2016:RCN}, whether this trace belongs to a certain distinguished
region, whose cardinal is much smaller than $q$.

The general strategy of attacking the decision PLWE problem via reduction and/or trace
maps to smaller spaces is already well known. For instance, in
\cite[\S 3.2.1]{Peikert:2016:HIR} it is discussed how the dual trace pairing can be used to
detect whether the reduction of the error distribution modulo a prime ideal of
$R$ over $q$ is uniform or not; but our approach is totally different,
since we do not resort to dual lattices at all.

We have structured our paper in four sections: Section 2 recalls several
definitions and notations, as well as some facts on random variables over
finite fields, which we will need later on. Section 3 consists of two
subsections: the first one recalls in detail the attack developed by
\cite{ELOS:2015:PWI}; motivated by the latter, the second one introduces our
attack in two stages: the
first stage is an attack against samples whose uniform first component belongs
to a certain distinguished subring $R_{q,0}\subseteq R_q$ and the second stage
is a general attack which reduces to the previous one in probabilistic
polynomial time on the size of the set of samples and the parameters of the
quotient ring. Finally, Section 4 describes simulations of our attack using Maple.
The Maple code for our examples has been uploaded to a \textsc{GitHub} domain
referenced therein.

Finally, we must stress out that neither this work nor \cite{ELOS:2015:PWI}, \cite{EHL:2014:WIP}, \cite{Peikert:2016:HIR}, \cite{RSW:2018:RPP} does not attempt to recommend or rule out any precise sets of parameters for PLWE-based cryptosystems. Our aim is just to provide a further study of algebraic cryptanalysis towards these schemes, which is to be regarded as a natural continuation of the aforementioned works and shows how the trade-off between the order of the traces modulo $q$, the variance and the dimension impact on the security of the cryptosystem.

\section{Preliminary facts and notations}
Before recalling the definition of the PLWE problem, we find it convenient to
explicitly mention and prove a few facts about discrete uniform and
discrete Gaussian random variables over finite fields, as they are at the very
background of the learning problems under study and, moreover, some of our
arguments in Section 3 will involve certain manipulations of uniform
distributions supported over finite dimensional vector spaces over
$\mathbb{F}_q$.

\begin{defn}Let $E$ be an $\mathbb{F}_q$-vector space of dimension $d$ so that
$|E|=q^d$. A random variable $X$ with values over $E$ is said to be uniform if
for every $v\in E$ we have that $P[X=v]=q^{-d}$.
\end{defn}
We will use the following fact:
\begin{lem}(\cite[Lemma 2.2]{BDNB:2023:TBC})
If $X_1,X_2,\dotsc,X_N$ are independent uniform distributions over
$\mathbb{F}_q$ then, for each
$\lambda_1,\lambda_2,\dotsc,\lambda_N\in\mathbb{F}_q$, not all of them zero, the
variable $\sum_{i=1}^N\lambda_i X_i$ is uniform.
\label{unifl}
\end{lem}
We will also make use of the following fact:
\begin{prop}
Let $X$ be a uniform distribution on the vector space $\mathbb{F}_q^N$
and let $A:\mathbb{F}_q^N\to\mathbb{F}_q^M$ be a linear map. Then, the distribution
$AX$ obtained by \emph{a)} sampling an element $\mathbf{x}\in\mathbb{F}_q^N$ uniformly
from $X$ and \emph{b)} returning $A\mathbf{x}$, is also uniform over $\mathrm{Im}(A)$. In
particular, the $i$-th component is uniformly sampled from $\mathbb{F}_q$.
\label{unifprod}
\end{prop}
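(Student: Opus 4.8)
The plan is to show that $P[AX=\mathbf{y}]$ takes the same value for every $\mathbf{y}\in\mathrm{Im}(A)$, and that this common value equals $|\mathrm{Im}(A)|^{-1}$. The essential observation is that, for any $\mathbf{y}\in\mathrm{Im}(A)$, the fiber $A^{-1}(\mathbf{y})=\{\mathbf{x}\in\mathbb{F}_q^N : A\mathbf{x}=\mathbf{y}\}$ is a coset of the kernel $\ker(A)$: if $\mathbf{x}_0$ is any fixed preimage of $\mathbf{y}$, then $A^{-1}(\mathbf{y})=\mathbf{x}_0+\ker(A)$. Consequently every nonempty fiber has exactly $|\ker(A)|$ elements, a count that does not depend on $\mathbf{y}$.

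First I would compute the probability directly from the definition of uniformity. Since $X$ is uniform on $\mathbb{F}_q^N$, each $\mathbf{x}$ has probability $q^{-N}$, so
\[
P[AX=\mathbf{y}]=\sum_{\mathbf{x}\in A^{-1}(\mathbf{y})}P[X=\mathbf{x}]=\frac{|A^{-1}(\mathbf{y})|}{q^N}=\frac{|\ker(A)|}{q^N}.
\]
By the rank--nullity theorem, $\dim\ker(A)=N-r$ where $r=\dim\mathrm{Im}(A)$, hence $|\ker(A)|=q^{\,N-r}$ and the probability equals $q^{-r}$. Since $|\mathrm{Im}(A)|=q^{r}$, this is exactly $|\mathrm{Im}(A)|^{-1}$, independent of the chosen $\mathbf{y}$; thus $AX$ is uniform over $\mathrm{Im}(A)$.

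For the final assertion, I would view the $i$-th component of $AX$ as the image of $X$ under the composition of $A$ with the $i$-th coordinate projection $\pi_i:\mathbb{F}_q^M\to\mathbb{F}_q$, equivalently as the linear form $\sum_{j=1}^{N}A_{ij}X_j$ determined by the $i$-th row of $A$. Provided this row is not identically zero, Lemma~\ref{unifl} applies and yields that $\sum_{j}A_{ij}X_j$ is uniform over $\mathbb{F}_q$; alternatively one re-runs the fiber-counting argument above with $M=1$ and image all of $\mathbb{F}_q$.

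Since the argument is a straightforward counting computation, I do not expect a serious obstacle; the only point demanding care is the equal-size claim for the fibers, which rests on the coset structure of $\ker(A)$ and must be invoked \emph{before} the rank--nullity count. I would also flag that the \emph{in particular} statement implicitly assumes a nonzero $i$-th row, since a zero row produces the constant $0$ rather than a uniform coordinate.
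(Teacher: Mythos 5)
Your proof is correct and follows essentially the same route as the paper's: the paper's one-line argument is precisely the observation that every fiber $A^{-1}(\mathbf{y})$ has $|\mathrm{Ker}(A)|$ elements, each of probability $q^{-N}$, which you simply spell out via the coset structure and rank--nullity. Your added caveat that the \emph{in particular} clause requires a nonzero $i$-th row is a fair and worthwhile observation that the paper leaves implicit.
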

\begin{proof}
The result is clear taking into account that for each $\mathbf{y}\in\mathrm{Im}(A)$
the number of elements in the preimage $A^{-1}(\mathbf{y})$ is constant and
independent from $\mathbf{y}$ (actually there are $|\mathrm{Ker}(A)|$, each of
them occurring with probability $1/q^N$).
\end{proof}

The second statistical component of our learning problem is the notion of a
discrete Gaussian random variable. Here we follow closely \cite[Section
2]{ELOS:2015:PWI} and \cite[Section 2]{LPR:2013:ILL}.

As it is well known, for $r>0$, the Gaussian function
$\rho_r:\mathbb{R}^N\to\mathbb{R}_{>0}$ defined by
$\rho_r(\mathbf{x})=\exp(-\pi||\mathbf{x}||^2/r^2)$ yields, after normalising,
the density function of a continuous Gaussian random variable of zero mean and covariance matrix $r^2 I_N$, where $I_N$ is the $N$-dimensional
identity matrix. Furthermore, let $\{\mathbf{h}_i\}_{i=1}^N$ be a basis of
$\mathbb{R}^N$.  Given a vector $\mathbf{r}=(r_1,\dotsc,r_N)\in\mathbb{R}_{+}^N$,
suppose that, for each $r_i$, $D_{r_i}$ is a $1$-dimensional continuous Gaussian
random variable of
zero mean and variance $r_i$. If these $N$ random variables are independent,
then the variable $D_{\mathbf{r}}=\sum_{i=1}^ND_{r_i}\mathbf{h}_i$ is a
continuous elliptic $N$-dimensional Gaussian variable of zero mean. Its covariance matrix has $\mathbf{r}$ as main diagonal and $0$ elsewhere.
Denote by $\rho_{\mathbf{r}}(\mathbf{x})$ the density function of
$D_{\mathbf{r}}$. In the particular case that all the components of the
vector $\mathbf{r}$ are equal, then we say that the Gaussian variable is
spherical.

Now, suppose that $\mathcal{L}$ is a full-rank lattice contained in
$\mathbb{R}^N$ (namely, an Abelian additive subgroup of $\mathbb{R}^N$ finitely
generated over $\mathbb{Z}$  of rank $N$) and suppose that
$\{\mathbf{h}_i\}_{i=1}^N$ is a
$\mathbb{Z}$-basis of $\mathcal{L}$, hence a basis of $\mathbb{R}^N$ as a
vector space.

\begin{defn}A discrete random variable $X$ supported on $\mathcal{L}$ is called
a discrete elliptic Gaussian random variable if its probability function is
given by
$$
P[X=\mathbf{x}]=\frac{\rho_{\mathbf{r}}(\mathbf{x})}
                     {\rho_{\mathbf{r}}(\mathcal{L})}
\quad
\text{for any }\mathbf{x}\in\mathcal{L}\text{ where }\rho_{\mathbf{r}}(\mathcal{L})=\sum_{\mathbf{x}\in\mathcal{L}}\rho_{\mathbf{r}}(\mathbf{x}).
$$
\end{defn}

\subsection{The PLWE problem}
Suppose that $f(x)\in\mathbb{Z}[x]$ is a monic irreducible polynomial of degree
$N$. Denote, as in the introduction, $R:=\mathbb{Z}[x]/(f(x))$ and for a prime
$q \in\mathbb{Z}$, let us set $R_q:=R/qR\cong\mathbb{F}_q[x]/(f(x))$.

We can naturally endow the ring $R$ with a lattice structure over
$\mathbb{R}^N$ via the following group monomorphism known in the literature as the coefficient embedding:
\[
\begin{array}{ccc}
\sigma_{coef}: R & \hookrightarrow & \mathbb{R}^N\\
\sum_{i=0}^{N-1}a_i\overline{x}^i & \mapsto & (a_0,\dotsc,a_{N-1}),
\end{array}
\]
where $\overline{x}^i$ stands for the class of $x^i$ modulo the ideal $(f(x))$.

The lattice $\sigma_{coef}(R)$ inherits an extra multiplicative structure from
$R$ and deserves a special terminology in the lattice-based cryptography
literature:
\begin{defn}An ideal lattice is a lattice $\mathcal{L}$ such that there exists a
ring $R$, an ideal $I\subseteq R$, and an additive group monomorphism $\sigma:
R\hookrightarrow\mathbb{R}^N$ such that $\mathcal{L}=\sigma(I)$.
\end{defn}

For a prime $q\geq 2$, if $\chi_R$ is a discrete Gaussian random variable
with values in $R$ (or, rather, in its image $\sigma_{coef}(R)\subseteq
\mathbb{R}^N$ by the coefficient embedding) we can also reduce its outputs modulo
$q$. This is hence a random variable with finite support and we refer to it as a
discrete Gaussian variable modulo $q$ and denote it by $\chi_{R_q}$.

Now, let $q \in \mathbb{Z}$ be a fixed prime and let $\chi_{R_q}$ be such a
$q$-reduced discrete Gaussian random variable of $0$ mean, hence with values in
$R_q$. Then we have the following

\begin{defn}[The PLWE distribution]
For a fixed element $s\in R_q$ and an error distribution $\chi_{R_q}$ over $R_q$
the PLWE distribution attached to the triple $(R_q,s, \chi_{R_q})$ is the
probabilistic algorithm $\mathcal{A}_{s, \chi_{R_q}}$ that works as follows:
\begin{enumerate}
\item Samples an element $a \in  R_q$ from a uniform random sampler.
\item Samples an element $e\in R_q$ from $\chi_{R_q}$.
\item Returns the element $(a, b=as+e)$.
\end{enumerate}
\label{plwedistr}
\end{defn}

Denoting by $R^2_q$ the Cartesian product $R_q\times R_q$, the PLWE problem for
the triple $(R,q,\chi_{R_q})$ is defined as
follows. We give the decision version, since that is all we need for our
purpose.

\begin{defn}[PLWE decisional problem]Let $\chi_{R_q}$ be defined as before. Let $\mathcal{A}$ be the PLWE distribution attached to a triple $(R_q,s, \chi_{R_q})$ for some $s\in R_q$ and $\mathcal{B}$ the uniform distribution on $R_q^2$. The (decisional) PLWE problem consists in deciding with non-negligible advantage, for a set of
samples of arbitrary size  $(a_i, b_i)\in R_q^2$, whether they are sampled from $\mathcal{A}$ or from $\mathcal{B}$.
\label{plwedefn}
\end{defn}

\section{\texorpdfstring{An attack based on traces over finite extensions of
$\mathbb{F}_q$}{An attack based on traces over finite extensions of Fq}}

The first part of this Section will present in detail the attack developed in
\cite{ELOS:2015:PWI}, since our proposal is an extension of such attack to the
case in which $f(x)$ has a zero-trace root $\alpha\in\mathbb{F}_{q^2}\setminus
\mathbb{F}_q$. Our attack will be then described in the second part of the
section.

We point out that although not explicitly stated in Definition \ref{plwedefn} of
the PLWE problem, it is assumed that the error distribution is a zero-mean discrete
spherical $q$-reduced Gaussian, $\chi_{R_q}$. Remark that most
security reduction proofs actually assume that $\chi_{R_q}$ is spherical: this
assumption is legitimate as long as only a uniform upper-bound on the diagonal
entries is required, as is the case in most of these security reduction proofs
and also in ours.

Moreover, we will assume, as in \cite{ELOS:2015:PWI}, that $\chi_{R_q}$ is
truncated at $2\sigma$, namely, that each of the independent Gaussian
distributions corresponding to each coordinate is so truncated and renormalized. In particular, the support  of each coordinate's probability mass function -is contained in $[-2\sigma, 2\sigma]\cap\mathbb{Z}$. 

\subsection{\texorpdfstring{Attack developed by \cite{ELOS:2015:PWI} and
\cite{ELOS:2016:RCN}}{Attack developed by ELOS:2015:PWI and ELOS:2016:RCN}}

In those references, the authors describe an attack against the PLWE
problem attached to a quotient ring $R_q=\mathbb{F}_q[x]/(f(x))$ and a prime
$q$, if there exists a simple root $\alpha\in\mathbb{F}_q$ such that either
\begin{enumerate}
\item[\emph{a)}] $\alpha=1$, or
\item[\emph{b)}] $\alpha$ has \emph{small order} modulo $q$, or
\item[\emph{c)}] $\alpha$ has \emph{small residue} modulo $q$, or
\item[\emph{d)}] none of the above situations holds but $\alpha$ and the variance are
such that one can distinguish the PLWE distribution from the uniform one.
\end{enumerate}
The expressions \emph{small order} and \emph{small residue}
are not made fully precise therein, but the examples provided by the authors show
that by such a term we must understand orders of up to $5$ while by \emph{small
residue} the authors understand $\alpha=2,3$. We will not address cases \emph{c)} and
\emph{d)}, which are investigated in \cite{BDM:2025:GARB}.

By the Chinese remainder theorem, we can write
\[
R_q\simeq \mathbb{F}_q[x]/(x-\alpha)\times\mathbb{F}_q[x]/(h(x)),
\]
with $h(x)$ coprime to $x-\alpha$. Then we have the ring homomorphism
\[
\psi_\alpha\colon R_q\rightarrow\mathbb{F}_q[x]/(x-\alpha)\simeq\mathbb{F}_q,
\]
which is precisely the evaluation map, namely, $\psi_\alpha(g(x)) = g(\alpha)$,
for any $g(x)\in R_q$.

Assume that $\alpha=1$ is a root of $f(x)$.
Given a
PLWE sample $(a(x),b(x)=a(x)s(x)+e(x))$, the error term,
$e(x)=\sum_{i=0}^{N-1}e_ix^i$, has its coefficients
$e_i\in\mathbb{F}_q$ sampled from $\chi_{R_q}$ of small
enough standard deviation $\sigma$ (the authors consider ranges for $\sigma$ in
the interval $[0.3, 8]$, as
suggested by the applications they analyse). Applying the evaluation map, we have
\[
b(1)-a(1)s(1)=e(1)=\sum_{i=0}^{N-1}e_i,
\]
and the sum $\sum_{i=0}^{N-1}e_i$ hence takes values on the set $[-2N\sigma,2N\sigma]\cap\mathbb{Z}$.

If one loops over $\mathbb{F}_q$ for a guess for $s(1)$, a right guess
$g$ is one such that the value $b(1)-a(1)g$ belongs to $\Sigma$. Since the error distribution is assumed to be a centered Gaussian truncated at $2\sigma$, we can detect which distribution do the samples come from (uniform or PLWE) with significant probability. We will refer to the set $\Sigma$ as the \emph{smallness region} for this attack.

In the same paper, the authors develop a more subtle attack, based on the
multiplicative order of the roots of $f(x)$. Suppose that $r$ is the order of a
root $\alpha\in\mathbb{F}_q, \alpha\ne 1$. Again given a PLWE sample
$(a(x),b(x)=a(x)s(x)+e(x))$, we have
\[
b(\alpha)-a(\alpha)s(\alpha)=e(\alpha) =
\sum_{j=0}^{r-1}\sum_{i=0}^{\left\lfloor\frac{N}{r}\right\rfloor-1}
e_{ir+j}\alpha^j.
\]

Now each coefficient $\epsilon _j := \sum_{i=0}^{\left\lfloor\frac{N}{r}\right\rfloor-1}
e_{ir+j}$ takes values on $[-2\frac{N}{r}\sigma, 2\frac{N}{r}\sigma]\cap\mathbb{Z}$. Hence,  in this case we define the \emph{smallness region}:
$$
\Sigma:=\left\{\sum_{j=0}^{r-1}x_j\alpha^j:\,\,x_j\in\left[-2\frac{N}{r}\sigma, 2\frac{N}{r}\sigma\right]\cap\mathbb{Z}\right\},
$$
namely, the set of all possibilities for $e(\alpha)$. $\Sigma$ can be precomputed and stored as a look-up table.

Observe that
\[
|\Sigma| \leq \left(4\frac{N}{r}\sigma + 1\right)^r.
\]

\begin{figure}[ht]
\centering
\begin{tabular}[c]{ll}
\hline
\textbf{Input:} & A collection of samples $C=\{(a_i(x),b_i(x))\}
_{i=1}^M\subseteq R_q^2$ \\
& A look-up table $\Sigma$ of all possible values for $e(\alpha)$ \\
\textbf{Output:} & A guess $g\in\mathbb{F}_q$ for $s(\alpha)$,\\
 & or \textbf{NOT PLWE},\\
 & or \textbf{NOT ENOUGH SAMPLES}\\
\hline
\end{tabular}

\medskip

\begin{itemize}
\item \texttt{\emph{set}} $G:=\emptyset$
\item \texttt{\emph{for}} $g\in \mathbb{F}_q$ \texttt{\emph{do}}
	\begin{itemize}
	\item \texttt{\emph{for}} $(a_i(x),b_i(x))\in C$  \texttt{\emph{do}}
		\begin{itemize}
		\item \texttt{\emph{if}} $b_i(\alpha)-a_i(\alpha)g\notin\Sigma$ \texttt{\emph{then}}
			\begin{itemize}
			\item \texttt{\emph{next}} $g$
			\end{itemize}
		\end{itemize}
	\item \texttt{\emph{set}} $G:=G\cup\{g\}$
	\end{itemize}
\item \texttt{\emph{if}} $G=\emptyset$ \texttt{\emph{then return}} \textbf{NOT PLWE}
\item \texttt{\emph{if}} $G=\{g\}$ \texttt{\emph{then return}} $g$
\item \texttt{\emph{if}} $|G|>1$ \texttt{\emph{then return}} \textbf{NOT ENOUGH SAMPLES}
\end{itemize}
\hrule
\caption{Algorithm solving PLWE decision problem}
\label{alg1}
\end{figure}
These ideas turn in \cite{ELOS:2015:PWI} into the Algorithm~\ref{alg1}, whose
probability of success is given by the following result:

\begin{prop} Let $\Sigma$, $M$ and $r$ be defined as above, and assume $|\Sigma|<q$. Then
\begin{enumerate}
\item
If Algorithm~\ref{alg1} returns \textbf{PLWE} or \textbf{NOT ENOUGH SAMPLES},
then the samples are \textbf{PLWE} with probability at least $1-q\left(\frac
{|\Sigma|}{q}\right)^M$.
\item
If Algorithm~\ref{alg1} returns \textbf{NOT PLWE}, then the samples are
uniform with probability $1$.
\end{enumerate}
\label{pralg1}
\end{prop}
\begin{proof}
\begin{enumerate}
\item
Let $E$ represent the event ``Algorithm~\ref{alg1} returns \textbf{PLWE} or
\textbf{NOT ENOUGH SAMPLES}''. By using Bayes' theorem we have
\begin{eqnarray}\label{eq.1}
&& \mathrm{P}(D=\mathcal{G}_\sigma | E) = 1 - \mathrm{P}(D=U|E) \nonumber \\
&& = 1 -
\frac{\mathrm{P}(E|D=U)\cdot\mathrm{P}(D=U)}
{\mathrm{P}(E|D=U)\cdot\mathrm{P}(D=U)+
\mathrm{P}(E|D=\mathcal{G}_\sigma)\cdot\mathrm{P}(D=\mathcal{G}_\sigma)}
\nonumber \\
&& = 1 - \frac{\mathrm{P}(E|D=U)}{\mathrm{P}(E|D=U) +
\mathrm{P}(E|D=\mathcal{G}_\sigma)}.
\end{eqnarray}
Notice that we are implicitly assuming that both distributions can be chosen with equal probability. 

Observe that $\mathrm{P}(E|D=\mathcal{G}_\sigma)$ is the probability that there exist some $g\in\mathbb{F}_q$ such that $b_i(\alpha)-a_i(\alpha)g\in
\Sigma$, for all $i=1,\dotsc,M$. Since we are supposing that the distribution is precisely $\mathcal{G}_\sigma$, then at least one $g$ is sure to exist, namely, $s(\alpha)$, the evaluation of the secret. Then we have that
$\mathrm{P}(E|D=\mathcal{G}_\sigma)\geq \mathrm{P}(s(\alpha)\in\Sigma)=1$.

To compute $\mathrm{P}(E|D=U)$, denote by $U_g$ the event that $b_i(\alpha)-a_i
(\alpha)g\in\Sigma$, for all $i=1,\dotsc,M$. Then
\[
\mathrm{P}(E|D=U) = \mathrm{P}\left(\bigcup_{g=0}^{q-1} U_g\right) \leq
\sum_{g=0}^{q-1}\mathrm{P}(U_g),
\]
as the events $U_g$ might not be independent. But we are now supposing that the values $a_i$ and $b_i$ are uniform, thus $b_i-a_ig$ is also uniform due to Lemma~\ref{unifl}. Hence we have that $\mathrm{P}(b_i-a_ig\in\Sigma) =
|\Sigma|/q$ for any $i$ and any $g$, so
\[
\mathrm{P}(U_g) = \prod_{i=1}^M \frac{|\Sigma|}{q} = \left(\frac{|\Sigma|}{q}
\right)^M,
\]
and henceforth one obtains:
\[
\mathrm{P}(E|D=U) \leq q\left(\frac{|\Sigma|}{q}\right)^M.
\]
Setting all these values into Equation~\eqref{eq.1}, we eventually arrive at:
\[
\mathrm{P}(D=\mathcal{G}_\sigma | E) = 1 - \frac{\mathrm{P}(E|D=U)}
{\mathrm{P}(E|D=U)+1} \geq 1 - \mathrm{P}(E|D=U)\geq 1 -
q\left(\frac{|\Sigma|}{q} \right)^M.
\]
\item
Let $E^\prime$ stand for event ``Algorithm~\ref{alg1} returns \textbf{NOT PLWE}''. Again using Bayes' theorem we can write:
\begin{eqnarray*}
\mathrm{P}(D=U|E^\prime) & = & \frac{\mathrm{P}(E^\prime|D=U)\cdot\mathrm{P}(D=U)}
{\mathrm{P}(E^\prime|D=U)\cdot\mathrm{P}(D=U)+
\mathrm{P}(E^\prime|D=\mathcal{G}_\sigma)\cdot\mathrm{P}(D=\mathcal{G}_\sigma)}
\nonumber \\
& = & \frac{\mathrm{P}(E^\prime|D=U)}{\mathrm{P}(E^\prime|D=U) +
\mathrm{P}(E^\prime|D=\mathcal{G}_\sigma)}.
\end{eqnarray*}
But $\mathrm{P}(E^\prime|D=\mathcal{G}_\sigma)=1-\mathrm{P}(E|D=\mathcal{G}_\sigma)=0$, hence $\mathrm{P}(D=U|E^\prime)$, as required.
\end{enumerate}
\end{proof}

\begin{rem}We point out that the conclusion of \cite[Proposition~3.1]{ELOS:2015:PWI} is slightly different to ours. Namely, our conclusion when Algorithm~\ref{alg1} returns \textbf{NOT PLWE} is exactly the same, but in the case when Algorithm~\ref{alg1} returns \textbf{PLWE} or \textbf{NOT ENOUGH SAMPLES} our upper bound is slightly weaker. However, we think that in the proof provided in \cite[Proposition~3.1]{ELOS:2015:PWI} the probability $P(E|D=U)$ is underestimated. Moreover, in \cite{ELOS:2015:PWI} it is implicitly assumed that the sum of $N$ independent discrete truncated Gaussian distributions of variance $\sigma^2$ is a discrete truncated Gaussian distribution of variance $N\sigma^2$, as in the continuous non-truncated case. This claim needs further clarification not provided therein (see, for instance
\cite{GMPW:2020:IDG}). We have avoided this approach just considering that our smallness regions consist of all the possible values for the error polynomials evaluated at the roots.
\end{rem}

\begin{rem}
Notice that these attacks do not apply to cyclotomic moduli $\Phi_n(x)$. Indeed,
$\alpha=1$ is never a root modulo $q\nmid n$. Moreover, for $q\equiv 1\pmod{n}$
the order of each of the $\phi(n)$ different roots of $\Phi_n(x)$ is precisely
$n$.
\end{rem}

\subsection{Our method. Preliminary facts}

In this subsection, we present an attack against
PLWE for a ring $R_q:=\mathbb{F}_q[x]/(f(x))$ with $q$ prime and
$f(x)=\sum_{j=0}^Nf_jx^j\in\mathbb{Z}[x]$ monic and irreducible over
$\mathbb{Z}[x]$ of degree $N$. For the rest of this section, let us define
$N^*:=\left\lfloor\frac{N-1}{2}\right\rfloor$.

We start our study by assuming, first, that $x^2+1$ is irreducible over
$\mathbb{F}_q[x]$, and that $f(x)$ is divisible by $x^2+1$ in $\mathbb{F}_q[x]$.
Second, we will study the case where $f(x)$ has an irreducible quadratic factor
of the form $x^2+\rho$ with $\rho\in\mathbb{F}_q[x]$ with \emph{suitable}
multiplicative order modulo $q$.

Suppose first that $x^2+1$ is irreducible in $\mathbb{F}_q[x]$ and that
$x^2+1\mid f(x)$ in $\mathbb{F}_q[x]$. Hence, there exists
$\alpha\in\mathbb{F}_{q^2}\setminus\mathbb{F}_q$ with $\alpha^2=-1$ and
$f(\alpha)\equiv 0\pmod{q}$. Notice that $\Tr(\alpha)=0$, where $\Tr$ stands for
the trace of $\mathbb{F}_{q^2}$ over $\mathbb{F}_q$.

Now, if $(a(x),b(x)=a(x)s(x)+e(x))\in R_q^2$ is a PLWE sample attached to a
secret $s(x)$ and an error term $e(x)=\sum_{i=0}^{N-1}e_ix^i$, where the
coefficients $e_i$ are sampled from independent discrete Gaussians of mean $0$
and variance $\sigma^2$, then
\[
b(\alpha)-a(\alpha)s(\alpha)=e(\alpha),
\]
where now $s(\alpha)\in \mathbb{F}_{q^2}$, and
\begin{equation}
\Tr(b(\alpha)-a(\alpha)s(\alpha))=\Tr(e(\alpha))=\sum_{j=0}^{N-1}e_jt_j,
\label{partida}
\end{equation}
where $t_j=\Tr(\alpha^j)$. But if $j=2j'$, then
$t_j=\Tr((-1)^{j'})=2(-1)^{j'}$, otherwise $t_j=0$. Hence we are left with:
\begin{equation}
\Tr(b(\alpha)-a(\alpha)s(\alpha))=2\sum_{j=0}^{N^*}e_{2j}(-1)^j.
\label{traceerror}
\end{equation}
The sum of errors in Equation~\eqref{traceerror} takes values on the set $[-4N^*\sigma,4N^*\sigma]\cap\mathbb{Z}$.  Hence we can list the possible values $\frac{1}{2}\left(\Tr(b(\alpha)-a(\alpha)s(\alpha))\right)$. These elements are contained in the set 
\begin{equation}\label{sigma}
\Sigma:=[-2N^*\sigma, 2N^*\sigma]\cap\mathbb{Z},
\end{equation}
containing at most $4N^*\sigma+1$ elements, which we will term the \emph{smallness region} for this attack.

More in general, suppose now that for some non-zero $\rho\in\mathbb{F}_q$, the
polynomial $x^2+\rho$ is irreducible over $\mathbb{F}_q[x]$ and denote
by $r$ the multiplicative order of $-\rho$ modulo $q$. As before, assume that
$x^2+\rho\mid f(x)$ and that $\alpha^2=-\rho$  with $\alpha\in\mathbb{F}_{q^2}
\setminus\mathbb{F}_q$ so that $f(\alpha)\equiv 0\pmod{q}$. In this case, for a
PLWE-sample $(a(x),b(x)=a(x)s(x)+e(x))$ with $e(x)=\sum_{i=0}^{N-1}e_ix^i$, we
have:
\begin{equation}
\Tr(b(\alpha)-a(\alpha)s(\alpha))=\Tr(e(\alpha))=2\sum_{k=0}^{r-1}(-\rho)^k
\epsilon_k
\label{partida2}
\end{equation}
with $\epsilon_k := \sum_{i=0}^{\left\lfloor\frac{N^*}{r}\right\rfloor-1}e_
{2(ir+k)}$, for $0\leq k\leq r-1$.

Now each coefficient $\epsilon_k$ takes values on the set $-\left[2\frac{N^*}{r}\sigma, 2\frac{N^*}{r}\sigma\right]\cap\mathbb{Z}$. So, let us denote again by $\Sigma$ the set of all the possible
values of $\frac{1}{2}\Tr(e(\alpha))$. We have that 
\begin{equation}
|\Sigma|\leq \left(4\frac{N^*}{r}\sigma+1\right)^r.
\label{smregioncard}
\end{equation}
What makes \emph{suitable} the order of $-\rho$ to derive an attack is not
(just) being \emph{small}, but making the upper bound of Equation
(\ref{smregioncard}) much smaller than $q$, as we will analyse in the rest of
the Section.
We turn the previous ideas into an attack but first, for a reason which will be
transparent later, we define the following set:
\[
R_{q,0}=\{p(x)\in R_q: p(\alpha)\in\mathbb{F}_q\}.
\]
\begin{prop}The set $R_{q,0}$ is a ring, which, in particular, as an
$\mathbb{F}_q$-vector subspace of $R_q$ has dimension $N-1$.
\label{dim0}
\end{prop}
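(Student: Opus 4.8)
The plan is to analyze the evaluation map $\psi_\alpha\colon R_q\to\mathbb{F}_{q^2}$ given by $\psi_\alpha(p(x))=p(\alpha)$, which is the natural ring homomorphism obtained from the Chinese remainder decomposition $R_q\simeq\mathbb{F}_q[x]/(x^2+\rho)\times\mathbb{F}_q[x]/(h(x))$ composed with projection onto the first factor (recall $\mathbb{F}_q[x]/(x^2+\rho)\simeq\mathbb{F}_{q^2}$ since $x^2+\rho$ is irreducible). First I would observe that $R_{q,0}=\psi_\alpha^{-1}(\mathbb{F}_q)$. Since $\mathbb{F}_q$ is a subring of $\mathbb{F}_{q^2}$ and $\psi_\alpha$ is a ring homomorphism, the preimage of a subring is a subring; this immediately shows $R_{q,0}$ is a ring (it is closed under addition, multiplication, and contains $0$ and $1$, as $\psi_\alpha(1)=1\in\mathbb{F}_q$).

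For the dimension count, the key step is to compute the image of the composite $\mathbb{F}_q$-linear map $\Tr\circ\psi_\alpha\colon R_q\to\mathbb{F}_q$, where $\Tr$ is the trace of $\mathbb{F}_{q^2}$ over $\mathbb{F}_q$. An element $p(\alpha)\in\mathbb{F}_{q^2}$ lies in $\mathbb{F}_q$ precisely when it is fixed by the Frobenius, equivalently when $p(\alpha)=\tfrac12\Tr(p(\alpha))$; more usefully, writing $p(\alpha)=u+v\alpha$ with $u,v\in\mathbb{F}_q$ (using $\{1,\alpha\}$ as an $\mathbb{F}_q$-basis of $\mathbb{F}_{q^2}$), the condition $p(\alpha)\in\mathbb{F}_q$ is exactly $v=0$, i.e.\ the vanishing of the $\alpha$-coordinate. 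So $R_{q,0}$ is the kernel of the $\mathbb{F}_q$-linear functional $\ell\colon R_q\to\mathbb{F}_q$ sending $p(x)$ to the $\alpha$-coefficient of $p(\alpha)$. By the rank--nullity theorem, $\dim_{\mathbb{F}_q}R_{q,0}=N-\mathrm{rank}(\ell)$, so it suffices to show $\ell$ is surjective, i.e.\ nonzero.

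The main obstacle — really the only nontrivial point — is verifying that $\ell$ is not identically zero, equivalently that $\psi_\alpha$ hits an element with nonzero $\alpha$-part. This is where the hypothesis $\alpha\in\mathbb{F}_{q^2}\setminus\mathbb{F}_q$ is used: I would exhibit an explicit witness, namely $p(x)=x$, for which $\psi_\alpha(x)=\alpha$ has $\alpha$-coordinate equal to $1\neq 0$, so $\ell(x)=1$ and $\ell$ is surjective. (Equivalently, the surjectivity of $\psi_\alpha$ onto $\mathbb{F}_{q^2}$ follows from the CRT decomposition, which forces the $\alpha$-part to take every value.) Hence $\mathrm{rank}(\ell)=1$ and $\dim_{\mathbb{F}_q}R_{q,0}=N-1$. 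I would close by noting consistency with the computation in Equation~\eqref{traceerror}: in the $x^2+1$ case the functional $\Tr\circ\psi_\alpha$ reads off the alternating sum $2\sum_j e_{2j}(-1)^j$ of the even-indexed coefficients, which is manifestly nonzero as a linear form, confirming the codimension is exactly $1$.
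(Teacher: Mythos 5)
Your proposal is correct and follows essentially the same route as the paper: both reduce the dimension count to the observation that, writing $p(\alpha)$ in the $\mathbb{F}_q$-basis $\{1,\alpha\}$ of $\mathbb{F}_{q^2}$, membership in $R_{q,0}$ is exactly the vanishing of the $\alpha$-coordinate, a single nontrivial linear condition (the paper writes it explicitly as $\sum_{j=0}^{N^*}(-\rho)^jp_{2j+1}=0$, where the coefficient of $p_1$ is $1$, while you invoke rank--nullity with the witness $p(x)=x$). Your preimage-of-a-subring argument is a clean justification of the ring claim, which the paper dismisses as obvious.
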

\begin{proof}It is obvious that $R_{q,0}$ is a ring and an $\mathbb{F}_q$-vector
subspace of $R_q$. As for the dimension, notice that for
$p(x)=\sum_{i=0}^{N-1}p_ix^i$, we have that $p(x)\in R_{q,0}$ if and only if
\[
p(\alpha)=\sum_{j=0}^{N^*}(-\rho)^jp_{2j}+\sum_{j=0}^{N^*}(-\rho)^jp_{2j+1}
\alpha\in\mathbb{F}_q,
\]
hence $p(\alpha)\in\mathbb{F}_q$ if and only if
$\sum_{j=0}^{N^*}(-\rho)^jp_{2j+1}=0$, hence the result follows.
\end{proof}

Let us set $S:=\mathbb{F}_{q^2}$, and assume that we are given a set of samples
from $R_{q,0}\times R_q$. The goal is to distinguish whether these samples come
from the PLWE distribution (but with the $a(x)$ term belonging to $R_{q,0}$) or
from a uniform distribution with values in $R_{q,0}\times R_q$.

We provide Algorithm~\ref{alg2} to achieve such goal that, in summary, runs as
follows: given a sample $(a_i(x),b_i(x))\in R_{q,0}\times R_q$, we pick a guess
$s\in S$ for $s(\alpha)$ and check whether $\frac{1}{2}\Tr(b_i(\alpha)-a_i
(\alpha)s)$ belongs to $\Sigma$. Since $a(\alpha)\in\mathbb{F}_q$, then
\[
\frac{1}{2}\Tr(b(\alpha)-a(\alpha)s)=\frac{1}{2}\Tr(b(\alpha))-\frac{1
}{2}a(\alpha)\Tr(s).
\]
Therefore, it is enough to check, for each $g\in\mathbb{F}_q$ (so that
$g$ is a putative value for $\Tr(s)$), whether or not
\[
\frac{1}{2}\Tr(b(\alpha))-\frac{1}{2}a(\alpha)g\in\Sigma.
\]
If this is not the case, we are actually removing from $S$ all the elements
$s'\in\mathbb{F}_{q^2}$ such that $\Tr(s')=g$. This means that we remove $q$
elements from $S$ at a time.

Remark that if the algorithm returns an element
$g\in\mathbb{F}_q$, we should understand that this element is just the trace of
one of the $q$ possible guesses for $s(\alpha)$. However, this is
(even if weaker than Algorithm~\ref{alg1}) enough as a decision attack.

\begin{figure}[ht]
\centering
\begin{tabular}[c]{ll}
\hline
\textbf{Input:} & A set of samples $C=\{(a_i(x),b_i(x))\}_{i=1}^M\in
R_{q,0}\times R_q$ \\
& A lookup table $\Sigma$ of all possible values for $\Tr(e(\alpha))$ \\
\textbf{Output:} & \textbf{PLWE},\\
                 & or \textbf{NOT PLWE},\\
                 & or \textbf{NOT ENOUGH SAMPLES}\\
\hline
\end{tabular}

\medskip

\begin{itemize}
\item \texttt{\emph{set}} $G:=\emptyset$
\item \texttt{\emph{for}} $g\in \mathbb{F}_q$ \texttt{\emph{do}}
	\begin{itemize}
	\item \texttt{\emph{for}} $(a_i(x),b_i(x))\in C$  \texttt{\emph{do}}
		\begin{itemize}
		\item \texttt{\emph{if}} $\frac{1}{2}\left(\Tr(b(\alpha))
                                -a(\alpha)g\right)\notin\Sigma$ \texttt{\emph{then}}
			\begin{itemize}
			\item \texttt{\emph{next}} $g$
			\end{itemize}
		\end{itemize}
	\item \texttt{\emph{set}} $G:=G\cup\{g\}$
	\end{itemize}
\item \texttt{\emph{if}} $G=\emptyset$ \texttt{\emph{then return}} \textbf{NOT
PLWE}
\item \texttt{\emph{if}} $|G|=1$ \texttt{\emph{then return}} \textbf{PLWE}
\item \texttt{\emph{if}} $|G|>1$ \texttt{\emph{then return}} \textbf{NOT ENOUGH
SAMPLES}
\end{itemize}
\hrule
\caption{Decision attack against $R_{q,0}$-PLWE}
\label{alg2}
\end{figure}

Actually, if the output of Algorithm~\ref{alg2} is such that $|G|=1$, say
$G=\{g\}$, Algorithm~\ref{alg2} (unlike Algorithm~\ref{alg1}) does not output a
guess for $s(\alpha)$; instead, $g$ is a guess for $\frac{1}{2}\Tr(s(\alpha))$
and in this case, we can venture that there exists $s(x)\in R_q$ such that
$b_i(x)-a_i(x)s(x)$ belongs to the error distribution for every sample
$(a_i(x), b_i(x))$.

Next, we evaluate the complexity of our attack in terms of
$\mathbb{F}_q$-multiplications:
\begin{prop}Given $M$ samples in $R_{q,0}\times R_q$, the number of
$\mathbb{F}_q$-multiplications required for Algorithm~\ref{alg2} is, at worst,
of order $\mathcal{O}(Mq)$.
\end{prop}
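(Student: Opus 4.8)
The plan is to separate the cost of Algorithm~\ref{alg2} into a one-time, per-sample \emph{precomputation} phase and the main double loop, and to show that each contributes at most $\mathcal{O}(Mq)$ multiplications. The crucial observation is that in the membership test of the inner loop the two sample-dependent quantities $a_i(\alpha)\in\mathbb{F}_q$ and $\Tr(b_i(\alpha))\in\mathbb{F}_q$ do not depend on the guess $g$; hence I would compute them once for each of the $M$ samples and store them, so that the test performed for every pair $(g,(a_i,b_i))$ reduces, via the linearity identity $\tfrac12\Tr(b_i(\alpha)-a_i(\alpha)g)=\tfrac12\Tr(b_i(\alpha))-\tfrac12 a_i(\alpha)g$ already recorded before the statement, to a single scalar operation.

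First I would bound the precomputation. Using the expansion in the proof of Proposition~\ref{dim0} together with $\alpha^q=-\alpha$ (equivalently $\Tr(\alpha)=0$), both $a_i(\alpha)=\sum_{j=0}^{N^*}(-\rho)^j a_{i,2j}$ and $\tfrac12\Tr(b_i(\alpha))=\sum_{j=0}^{N^*}(-\rho)^j b_{i,2j}$ are inner products of length $N^*+1$ against the fixed vector of powers $(-\rho)^j$, $0\le j\le N^*$. Precomputing these powers once costs $\mathcal{O}(N)$ multiplications, after which each of the $2M$ inner products costs $\mathcal{O}(N^*)=\mathcal{O}(N)$ multiplications, so the whole precomputation is $\mathcal{O}(MN)$.

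Next I would count the main loop. The outer loop ranges over the $q$ elements $g\in\mathbb{F}_q$ and the inner loop over the $M$ samples, so in the worst case, when the early exit \emph{next} $g$ is never triggered, all $qM$ iterations are executed. Thanks to the precomputation, each iteration performs a single $\mathbb{F}_q$-multiplication, namely $(\tfrac12 a_i(\alpha))\cdot g$, followed by a subtraction and an $\mathcal{O}(1)$ look-up of the result in the table $\Sigma$; the membership test itself involves no multiplications. Hence the main loop costs $\mathcal{O}(Mq)$ multiplications.

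Combining the two phases gives $\mathcal{O}(MN)+\mathcal{O}(Mq)=\mathcal{O}(M(N+q))$. The only point that genuinely requires the standing hypotheses is the final simplification to $\mathcal{O}(Mq)$: in the parameter range fixed in the introduction one takes $q=\Theta(N^2)$, whence $N=\Theta(\sqrt{q})\le q$ and the precomputation term $\mathcal{O}(MN)=\mathcal{O}(M\sqrt{q})$ is dominated by the loop term $\mathcal{O}(Mq)$. This absorption, rather than the elementary operation counts, is the step I would be most careful to justify explicitly, since it is the one place where the stated bound depends on the relationship between $q$ and $N$ and not merely on the structure of the algorithm.
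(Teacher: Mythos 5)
Your proof is correct and follows essentially the same route as the paper's: the dominant count is one $\mathbb{F}_q$-multiplication, namely $a_i(\alpha)\cdot g$, per iteration of the double loop over $g\in\mathbb{F}_q$ and the $M$ samples, giving $Mq$ in the worst case. The only difference is that you explicitly budget the $\mathcal{O}(MN)$ precomputation of $a_i(\alpha)$ and $\Tr(b_i(\alpha))$ and absorb it via $N=\mathcal{O}(q)$, whereas the paper simply asserts that evaluating $\Tr(b_i(\alpha))$ costs no multiplications (literally true only for $\rho=\pm 1$); your accounting is the more careful one.
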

\begin{proof}
To begin with, given $g\in \mathbb{F}_q$:
\begin{itemize}
\item For each sample $(a_i(x),b_i(x))$, evaluating $\Tr(b_i(\alpha))$, requires
no multiplications in $\mathbb{F}_q$. Hence checking whether
$(\Tr(b(\alpha))-a(\alpha)g)/2\in\Sigma$ requires only $1$ multiplication in
$\mathbb{F}_q$.
\item In the worst case, the condition will fail for all the samples, in
which case we will perform $M$ multiplications in
$\mathbb{F}_q$ for each $g\in\mathbb{F}_q$.
\end{itemize}
Since the previous steps must be performed for every $g\in\mathbb{F}_q$, the
number of multiplications for the worst case will be $Mq$.
\end{proof}

Next, to obtain the probability of success of our attack, for
$p(x):=\sum_{j=0}^{N}p_{j}x^j$, let us denote
$p^*(x):=\sum_{j=0}^{N^*}p_{2j}x^j$. Denote, likewise
$R_q^*:=\mathbb{F}_q[x]/(f^*(x))$. Then, we have:
\begin{lem}Given a sample $((a(x),b(x))\in R_{q,0}\times R_q$ for
Algorithm~\ref{alg2} and given $g\in\mathbb{F}_q$,
$a(x)=\sum_{j=0}^{N-1}a_jx^j$ and $b(x)=\sum_{j=0}^{N-1}b_jx^j$,  then
\[
(\Tr(b(\alpha))-a(\alpha)g)/2\in\Sigma\mbox{ if and only if }b^*(-\rho)-ga^*(-\rho)\in \Sigma.
\]
In particular, the result of Algorithm~\ref{alg2} on samples
$(a_i(x),b_i(x))\in R_{q,0}\times R_q$ is exactly the result of Algorithm~\ref{alg1} applied
to the samples $(a_i^*(x),b_i^*(x))\in(R_q^*)^2$.
\label{remark1}
\end{lem}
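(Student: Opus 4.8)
The plan is to prove the equivalence by an explicit computation of the trace evaluated at $\alpha$, and then observe that this computation expresses everything in terms of the even-indexed coefficients, which is exactly what the starred polynomials capture. First I would write $b(\alpha)=\sum_{j=0}^{N-1}b_j\alpha^j$ and split the sum according to the parity of $j$. Using $\alpha^2=-\rho$, the even part $\sum_{j}b_{2j}\alpha^{2j}=\sum_{j}b_{2j}(-\rho)^j$ already lies in $\mathbb{F}_q$, while the odd part is $\left(\sum_{j}b_{2j+1}(-\rho)^j\right)\alpha$. Since $\Tr(c)=2c$ for $c\in\mathbb{F}_q$ and $\Tr(\alpha)=0$, taking the trace annihilates the odd part and doubles the even part, so that $\tfrac12\Tr(b(\alpha))=\sum_{j=0}^{N^*}b_{2j}(-\rho)^j=b^*(-\rho)$. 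The same computation applies to $a$, and because $a(x)\in R_{q,0}$ we know from the proof of Proposition~\ref{dim0} that its odd part vanishes at $\alpha$, so $a(\alpha)=a^*(-\rho)\in\mathbb{F}_q$ and $g\,a(\alpha)=g\,a^*(-\rho)$.

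Combining these two evaluations gives
\[
\tfrac12\bigl(\Tr(b(\alpha))-a(\alpha)g\bigr)=b^*(-\rho)-g\,a^*(-\rho),
\]
which is an identity between elements of $\mathbb{F}_q$; membership of the left-hand side in $\Sigma$ is therefore literally the same condition as membership of the right-hand side, proving the stated equivalence. For the ``in particular'' clause I would note that evaluating the starred polynomial $a^*(x)=\sum_{j=0}^{N^*}a_{2j}x^j$ at the point $-\rho\in\mathbb{F}_q$ is exactly the evaluation map $\psi_{-\rho}$ of the first attack applied in $R_q^*=\mathbb{F}_q[x]/(f^*(x))$, provided $-\rho$ is a root of $f^*(x)$; this follows because $x^2+\rho\mid f(x)$ forces $f^*(-\rho)=0$ after the coefficient folding, so the starred samples $(a_i^*,b_i^*)$ are genuine inputs for Algorithm~\ref{alg1} with root $\alpha'=-\rho$. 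Hence the per-sample test of Algorithm~\ref{alg2} agrees character-by-character with the per-sample test of Algorithm~\ref{alg1} on the folded samples, and since both algorithms iterate over the same guess set $g\in\mathbb{F}_q$ and aggregate the surviving guesses identically, their outputs coincide.

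The main obstacle I anticipate is bookkeeping rather than conceptual depth: I must be careful that the folding $p\mapsto p^*$ interacts correctly with the degree bounds and the definition of $N^*=\lfloor(N-1)/2\rfloor$, so that the index ranges in $b^*(-\rho)=\sum_{j=0}^{N^*}b_{2j}(-\rho)^j$ genuinely exhaust all even-indexed coefficients of $b$ up to degree $N-1$. I would also want to verify explicitly that $-\rho$ is a root of $f^*$ so that the Chinese-remainder setup of Algorithm~\ref{alg1} is legitimately available; this is where the hypothesis $x^2+\rho\mid f(x)$ is really used, and I would spell out the relation between the coefficients of $f$ and those of $f^*$ to confirm $f^*(-\rho)=0$. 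Everything else is the routine parity splitting and the two elementary facts $\Tr(\alpha)=0$ and $\Tr(c)=2c$ for $c\in\mathbb{F}_q$ already recorded in the text.
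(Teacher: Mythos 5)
Your proposal is correct and follows the same route as the paper, which simply declares the lemma ``immediate from the fact that $f(\alpha)=f^*(-\rho)=0$'': your parity splitting of $p(\alpha)$ via $\alpha^2=-\rho$, the use of $\Tr(\alpha)=0$ and $\Tr(c)=2c$ to obtain $\tfrac12\Tr(p(\alpha))=p^*(-\rho)$, and the check that $f^*(-\rho)=0$ are exactly the computations the paper leaves implicit. Your added care about the index ranges for $N^*$ and about $a(\alpha)=a^*(-\rho)$ for $a\in R_{q,0}$ is a faithful (and more explicit) rendering of the intended argument.
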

\begin{proof}This is immediate from the fact that $f(\alpha)=f^*(-\rho)=0$.
\end{proof}

The following result will also be useful in our proof:
\begin{lem}Let $\{(a_i(x),b_i(x))\}_{i=1}^M$ be a set of input samples for
Algorithm~\ref{alg2}, where, as usual, the $a_i(x)$ are taken uniformly from
$R_{q,0}$ with probability $q^{1-N}$. Then, for the corresponding input samples
$(a_i^*(x),b_i^*(x))$ for Algorithm~\ref{alg1}, the elements
$a_i^*(x)$ are taken uniformly from $R_{q}^*$ with probability
$q^{1-N+N^\ast}=\mathcal{O}(q^{-N^\ast})$.
\label{unifprima}
\end{lem}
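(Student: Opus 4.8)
The plan is to recognise the passage $a\mapsto a^*$ as a single surjective $\mathbb{F}_q$-linear map and then let Proposition~\ref{unifprod} do all the probabilistic work. Concretely, I would define
\[
\Lambda\colon R_{q,0}\longrightarrow R_q^*,\qquad \Lambda(a)=a^*\bmod f^*,
\]
where $a^*=\sum_{j=0}^{N^*}a_{2j}x^j$ is built from the (unique) degree-$<N$ representative of $a$. Extracting a fixed subset of coordinates and reducing modulo $f^*$ are both $\mathbb{F}_q$-linear, so $\Lambda$ is $\mathbb{F}_q$-linear; identifying $R_{q,0}$ with $\mathbb{F}_q^{N-1}$ through the coefficient embedding (legitimate by Proposition~\ref{dim0}), $\Lambda$ becomes an honest linear map to a finite-dimensional $\mathbb{F}_q$-space, exactly the setting of Proposition~\ref{unifprod}.

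First I would pin down the target. The ring $R_q^*=\mathbb{F}_q[x]/(f^*)$ has $\mathbb{F}_q$-dimension $\deg f^*$, and since $f^*$ collects the even coefficients of $f$, its degree equals $\lfloor N/2\rfloor=N-1-N^*$; hence $|R_q^*|=q^{\,N-1-N^*}$, so the uniform law on $R_q^*$ assigns each element probability $q^{\,1-N+N^*}$. The only degenerate point to flag is that, for odd $N$, one needs the top even coefficient $f_{N-1}$ to be nonzero for $\deg f^*$ to be exactly $\lfloor N/2\rfloor$.

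The heart of the argument is surjectivity of $\Lambda$. By Proposition~\ref{dim0}, membership in $R_{q,0}$ imposes the single relation $\sum_{j}(-\rho)^j a_{2j+1}=0$, which involves \emph{only} the odd-indexed coefficients; thus the even coefficients $a_0,a_2,\dots,a_{2N^*}$ stay completely free. Given any target class in $R_q^*$, represented by a polynomial of degree $\le\deg f^*-1\le N^*$, I would set the even coefficients of $a$ equal to its coefficients and take all odd coefficients to be zero: this $a$ lies in $R_{q,0}$ (the relation holds trivially) and satisfies $\Lambda(a)=$ the prescribed class, so $\Lambda$ is onto.

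Finally I would invoke Proposition~\ref{unifprod}: the image of the uniform distribution on $R_{q,0}$ under the linear map $\Lambda$ is uniform on $\mathrm{Im}(\Lambda)=R_q^*$, whence every $a_i^*$ is drawn uniformly from $R_q^*$ with probability $1/|R_q^*|=q^{\,1-N+N^*}=\mathcal{O}(q^{-N^*})$; equivalently, by rank-nullity $|\ker\Lambda|=q^{(N-1)-(N-1-N^*)}=q^{N^*}$, so each fibre contributes $q^{N^*}\cdot q^{1-N}$. The step I expect to be trickiest is the degree bookkeeping around the reduction modulo $f^*$: for odd $N$ the polynomial $a^*$ can attain degree $N^*=\deg f^*$, so the reduction is not vacuous, and it is precisely this reduction (the extra factor $q$ in the fibre) that reconciles the naive coefficient count with the stated probability. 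Routing everything through Proposition~\ref{unifprod} is attractive exactly because it renders this bookkeeping automatic: once surjectivity is established, uniformity forces the probability to be $1/|R_q^*|$ with no separate fibre computation.
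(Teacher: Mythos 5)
Your argument is correct and, in substance, rests on the same structural fact as the paper's proof (constant fibres of a linear map), but the two are organised quite differently. The paper's proof is a bare-hands count: it fixes the even part $a_0(x)$ of a sample and counts the odd-coefficient tuples compatible with the single linear relation of Proposition~\ref{dim0}, declaring $q^{N^*}$ preimages for each $a^*$ and hence the probability $q^{N^*}\cdot q^{1-N}$. You instead package the whole passage as one surjective $\mathbb{F}_q$-linear map $\Lambda\colon R_{q,0}\to R_q^*$ and let Proposition~\ref{unifprod} force the pushforward to be uniform on $R_q^*$, so the probability is $1/|R_q^*|$ with no fibre computation at all; surjectivity (which you establish by taking all odd coefficients zero, legitimate because the defining relation of $R_{q,0}$ involves only odd coefficients) is the one extra step your route requires. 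Your version buys two things the paper leaves implicit. First, it proves genuine uniformity over all of $R_q^*$ rather than merely exhibiting the size of one fibre. Second, it is honest about the reduction modulo $f^*$: for the count $|R_q^*|=q^{N-1-N^*}$ you must read $f^*$ as the full even part of $f$ (so that $\deg f^*=\lfloor N/2\rfloor$ and $f^*(-\rho)=0$, as Lemma~\ref{remark1} needs), not the literal truncation at $j=N^*$ in the paper's definition, and your caveat that $f_{N-1}\neq 0$ is required when $N$ is odd is a hypothesis the paper omits. This care is not cosmetic: when $N$ is odd the paper's tuple $(a_1,a_3,\dotsc,a_{2N^*+1})$ reaches the nonexistent index $N$ and the raw count of admissible odd parts is $q^{N^*-1}$, not $q^{N^*}$; it is precisely the $q$-to-one reduction of degree-$\leq N^*$ polynomials modulo $f^*$ that restores the stated probability $q^{1-N+N^*}$, exactly as you point out. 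So your proof is a valid alternative that also tidies up the parity bookkeeping.
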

\begin{proof}
Let us write a sample, taken uniformly from $R_{q,0}$, as $a(x) = a_0(x) +
a_1(x)$, where
\[
a_0(x) = \sum_{j=0}^{N^*}a_{2j}x^{2j},\quad
a_1(x) = \sum_{j=0}^{N^*}\left(a_{2j+1}x^{2j}\right)x.
\]
For a start, observe that the polynomial $a(x)$ will be sampled from $R_{q,0}$
with probability $q^{1-N}$. But if we fix $a_0(x)$, any $N^\ast$-tuple $(a_1,a_3,
\dotsc,a_{2N^\ast+1})$ for $a_1(x)$ such that $\sum_{j=0}^{N^*}a_{2j+1}(-\rho)
^{2j}=0$ gives rise to the same polynomial $a^\ast(x)$. Since there are
$q^{N^*}$ of such tuples, and $a(x)$ was sampled uniformly, i.e., with
probability $q^{1-N}$, the sample $a^\ast(x)$ for Algorithm~\ref{alg1} will
occur with probability $q^{1-N}q^{N^\ast}$.
\end{proof}

We can now study the probability of success of our attack:

\begin{prop}Assume that $|\Sigma|<q$ and let $M$ be the number of input samples.
If Algorithm~\ref{alg2} returns \textbf{NOT
PLWE}, then the samples come from the uniform distribution on $R_{q,0}\times
R_q$. If it outputs anything other than \textbf{NOT PLWE}, then the samples are
valid PLWE samples with probability $1-q\left(|\Sigma|/q\right)^M$. In
particular, this probability tends to 1 as $M$ grows.
\label{pralg2}
\end{prop}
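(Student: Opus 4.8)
The plan is to reduce the behaviour of Algorithm~\ref{alg2} entirely to that of Algorithm~\ref{alg1} and then invoke Proposition~\ref{pralg1} as a black box. The engine of the reduction is Lemma~\ref{remark1}: for every sample $(a(x),b(x))\in R_{q,0}\times R_q$ and every guess $g\in\mathbb{F}_q$, the test $(\Tr(b(\alpha))-a(\alpha)g)/2\in\Sigma$ performed by Algorithm~\ref{alg2} holds if and only if the test $b^*(-\rho)-g\,a^*(-\rho)\in\Sigma$ performed by Algorithm~\ref{alg1} holds. Hence the sets $G$ built by the two algorithms coincide guess-by-guess, so Algorithm~\ref{alg2} on $\{(a_i,b_i)\}_{i=1}^M$ returns exactly the same verdict (\textbf{PLWE}, \textbf{NOT PLWE} or \textbf{NOT ENOUGH SAMPLES}) as Algorithm~\ref{alg1} on $\{(a_i^*,b_i^*)\}_{i=1}^M$, the latter running over $R_q^*=\mathbb{F}_q[x]/(f^*(x))$ at the root $-\rho\in\mathbb{F}_q$ of $f^*$. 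It therefore suffices to check that the reduced samples meet the hypotheses of Proposition~\ref{pralg1} in the two relevant regimes.

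Next I would transport the two input laws through the map $p\mapsto p^*$. In the uniform regime, where $a_i$ is uniform on $R_{q,0}$ and $b_i$ is independently uniform on $R_q$, Lemma~\ref{unifprima} gives that $a_i^*$ is uniform on $R_q^*$ (the quoted probability $q^{1-N+N^*}$ equals $1/|R_q^*|$, since $\deg f^*$ is $N^*$ or $N^*+1$ according to the parity of $N$), while Proposition~\ref{unifprod} applied to the surjective $\mathbb{F}_q$-linear map $b\mapsto b^*\bmod f^*(x)$ gives that $b_i^*$ is uniform on $R_q^*$; independence is preserved, so $(a_i^*,b_i^*)$ is uniform on $(R_q^*)^2$, which is exactly the input for which Proposition~\ref{pralg1} is phrased. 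In the PLWE regime, where $b_i=a_is+e_i$, I would use $a(\alpha)=a^*(-\rho)$ for $a\in R_{q,0}$ together with $\tfrac12\Tr(b(\alpha))=b^*(-\rho)$ (both consequences of $\alpha^2=-\rho$) to rewrite $b_i^*(-\rho)=a_i^*(-\rho)\,g+\tfrac12\Tr(e_i(\alpha))$ with $g=\tfrac12\Tr(s(\alpha))$ and error $\tfrac12\Tr(e_i(\alpha))\in\Sigma$; since Algorithm~\ref{alg1} inspects each sample only through its value at $-\rho$, the reduced samples behave as genuine PLWE samples for $R_q^*$ with secret value $g$.

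With both regimes in place, I would apply Proposition~\ref{pralg1} to $\{(a_i^*,b_i^*)\}$, whose hypothesis $|\Sigma|<q$ is the standing assumption: if Algorithm~\ref{alg1} returns \textbf{NOT PLWE} the reduced, hence the original, samples are uniform, and otherwise they are PLWE samples with probability $1-(|\Sigma|/q)^M\to 1$ as $M\to\infty$. By the verdict-equality of Lemma~\ref{remark1} this is precisely the claim for Algorithm~\ref{alg2}.

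I expect the only genuinely delicate point to be the distributional bookkeeping of the map $p\mapsto p^*$, namely verifying that it pushes the respective uniform laws onto the uniform law on $R_q^*$. This is where the dimension count $\dim R_{q,0}=N-1$ of Proposition~\ref{dim0}, the identity $q^{1-N+N^*}=1/|R_q^*|$ underlying Lemma~\ref{unifprima}, and the surjectivity required by Proposition~\ref{unifprod} all have to line up, with the two parities of $N$ handled uniformly. Everything else is a formal consequence of the sample-by-sample equivalence of the two algorithms together with the cited Proposition~\ref{pralg1}.
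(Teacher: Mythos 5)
Your proposal is correct and follows essentially the same route as the paper's proof: reduce Algorithm~\ref{alg2} to Algorithm~\ref{alg1} on the starred samples via Lemma~\ref{remark1}, check that the map $p\mapsto p^*$ sends the uniform law to the uniform law on $R_q^*$ (Lemma~\ref{unifprima}, Proposition~\ref{unifprod}) and the PLWE law to a PLWE law with secret value $\tfrac12\Tr(s(\alpha))$, and then invoke Proposition~\ref{pralg1}. The paper phrases the distributional step as the biconditional ``$X$ uniform iff $X^*$ uniform'' (proving the reverse implication by contradiction) and tracks the verdicts through explicit events, but this is logically the same bookkeeping you carry out regime by regime.
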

\begin{proof}
Let us denote by $X$ an unknown random distribution which can be either uniform or a PLWE one. The target of
Algorithm~\ref{alg2} is to determine which is the case. For a sample $v=(a(x), b(x))\in R_{q,0}\times R_q$ output by
$X$, we can write
$$
v=(a_0(x),b_0(x))+(a_1(x), b_1(x)),
$$
where $a_0(x)=a^*(x^2)$ and $b_0(x)=b^*(x^2)$, according to
Lemma~\ref{unifprima}, and $a_1(\alpha)=0$. Let $X^*$ be the random distribution
that, after sampling $X$ for a pair $(a(x),b(x))$, returns $(a^*(x),b^*(x))\in
R_q^*\times R_q^*$.

Notice that if $X$ is a uniform distribution, then Lemma~\ref{unifprima}
ensures that $X^*$ is also a uniform distribution with values on $R_q^*\times
R_q^*$.
As for the reciprocal, suppose that $X^*$ is uniform, but $X$ is PLWE with a
truncated Gaussian distribution associated to the error.
Then consider pairs $(a_i^\ast(x), b_i^\ast(x))$ output by $X^\ast$; since by
hypothesis $X^\ast$ is uniform, evaluating $(a_i^*(x),b_i^*(x))$ at $-\rho$ should
yield a uniform distribution of values, by virtue of Lemma~\ref{unifprima} and
Lemma~\ref{unifl}; however, that evaluation equals $(a_i^*(-\rho),a_i^*
(-\rho)s^*(-\rho)+e_i^*(-\rho))$, where each $a_i^*(-\rho)$ is indeed uniformly
sampled due to Proposition~\ref{unifprod} and Lemma~\ref{unifl}; but each
$e_i^*(-\rho)$ is the output of the truncated Gaussian associated to $X$, which we
assumed PLWE.  Thus, $X^\ast$ turns out to be not uniform, against the hypothesis.
In summary, the distribution $X$ is uniform if and only if $X^*$ is uniform.

If $X$ is the PLWE distribution, for a sample $(a(x), b(x))$ it will be
$b(x)=a(x)s(x)+e(x)$ with $e(x)$ drawn from an $R_q$-valued discrete truncated Gaussian,
hence setting $a(x)=a_0(x)+a_1(x)$, and likewise for $b(x)$, $s(x)$ and $e(x)$
we have that the even-degree part of $b(x)$ is
$$
b_0(x)=a_0(x)s_0(x)+a_1(x)s_1(x)+e_0(x).
$$
Notice that within Algorithm~\ref{alg2}, testing the sample $(a(x),b(x))$ in
each iteration will return the same value as if the test were applied to the
sample $(a_0(x),a_0(x)s_0(x)+e_0(x)) = (a^\ast(x^2),b^\ast(x^2))$, since
$a_1(\alpha)=0$. But then Lemma~\ref{unifprima} ensures that
Algorithm~\ref{alg2} gives the same result on sample $(a(x), b(x))$ as
Algorithm~\ref{alg1} on sample $(a^\ast(x^2), b^\ast(x^2))$.

Now, set input samples $S=\{(a_i(x),b_i(x))\}_{i=1}^M$ for Algorithm~\ref{alg2}
produced by $X$ and let $S^*=\{(a_i^*(x), b_i^*(x))\}_{i=1}^M$ be the
corresponding ``starred'' samples.

Let us define the following events:
\begin{itemize}
\item $E_q=$ The random distribution $X$ is uniform.
\item $E_q^*=$ The random variable $X^*$ is uniform.
\item $rP=$ \emph{Algorithm~\ref{alg2} returns \textbf{PLWE}} on input
samples $S$.
\item $rP^*=$ \emph{Algorithm~\ref{alg1} returns \textbf{PLWE}} on input
samples $S^*$.
\item $rNE=$ \emph{Algorithm~\ref{alg2} returns \textbf{NOT ENOUGH SAMPLES}}
on input samples $S$.
\item $rNE^*=$ \emph{Algorithm~\ref{alg1} returns \textbf{NOT ENOUGH
SAMPLES}} on input samples $S^*$.
\item $rNP=$ \emph{Algorithm~\ref{alg2} returns \textbf{NOT PLWE}} on input
samples $S$.
\item $rNP^*=$ \emph{Algorithm~\ref{alg1} returns \textbf{NOT PLWE}} on
input samples $S^*$.
\end{itemize}

As mentioned in Lemma~\ref{remark1}, the outcome of Algorithm~\ref{alg2} on
input samples $S$ is the same as the outcome of Algorithm~\ref{alg1} on input
samples $S^*$, hence
\[
rP\cup rNE=rP^*\cup rNE^*\mbox{ and }rNP=rNP^*.
\]

Moreover, as stated before, we have that $E_q=E_q^*$, so that
\[
P[E_q|(rP\cup rNE)]=P[E_q|(rP^*\cup rNE^*)]= P[E_q^*|(rP^*\cup rNE^*)]
\]
and in this case the result follows from Proposition~\ref{pralg1}.

On the other hand, since $rNP=rNP^*$, if Algorithm~\ref{alg2} returns
\textbf{NOT PLWE} (which happens precisely if Algorithm~\ref{alg1} returns
\textbf{NOT PLWE}), by Proposition~\ref{pralg1}, the distribution $X^*$ will
be uniform and hence the distribution $X$ will be uniform as well.

\end{proof}

\subsection{The attack. Stage 2}

Suppose we are given access to an oracle that generates samples
$\{(a_i(x),b_i(x)\}_{i\geq 1}\subseteq R_q^2$ either from the uniform
distribution or from the PLWE distribution. We claim that we can decide, with
non-negligible advantage, and in probabilistic polynomial time depending on $N$,
which distribution they come from.

We need some tools to justify our claim.
First of all, we define Algorithm~\ref{alg4}, which can be also thought of as
random variable $X_0$, with values on $R_{q,0}\times R_q$.

\begin{figure}[ht]
\centering
\begin{tabular}[c]{ll}
\hline
\textbf{Input:} & A distribution $X$ over $R_q^2$ \\
\textbf{Output:} & A sample $(a(x),b(x))\in R_{q,0}\times R_q$ \\
\hline
\end{tabular}

\medskip

\begin{itemize}
\item \texttt{\emph{set}} $count:=0$
\item \texttt{\emph{do}}
	\begin{itemize}
	\item	$(a(x),b(x)) \overset{X}{\leftarrow} R_q^2$
	\item $count \leftarrow count+1$
	\end{itemize}
\item \texttt{\emph{until}} $a(x)\in R_{q,0}$
\item[]
\item \texttt{\emph{return}} $(a(x),b(x)), count$
\end{itemize}
\hrule
\caption{Random variable $X_0$}
\label{alg4}
\end{figure}

In a nutshell, Algorithm~\ref{alg4} repeatedly samples the input variable $X$
until it obtains a pair $(a(x),b(x))$ such that $a(x)\in R_{q,0}$ and returns it
along with the number of invocations to $X$. Regarding this random variable
$X_0$, we have the following

\begin{prop}
Notations as above,
\begin{itemize}
\item[\emph{a)}] If $X$ is uniform over $R_q^2$ then $X_0$ is uniform over
$R_{q,0}\times R_q$.
\item[\emph{b)}] If $X$ is a PLWE distribution then $X_0$ is a $R_{q,0}\times R_q$-valued
PLWE distribution.
\end{itemize}
\end{prop}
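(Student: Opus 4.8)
The plan is to recognise Algorithm~\ref{alg4} as plain rejection sampling and to analyse $X_0$ as the conditional law of $X$ given the acceptance event. Write $A$ for the event $\{a(x)\in R_{q,0}\}$ that terminates the loop; the decisive structural feature is that $A$ depends only on the first coordinate of the sample. In both the uniform and the PLWE case the marginal of this first coordinate is uniform over $R_q$, so by Proposition~\ref{dim0} (which gives $|R_{q,0}|=q^{N-1}$) the acceptance probability is $P[A]=q^{N-1}/q^{N}=1/q>0$. Hence the loop halts with probability one (the counter $count$ is geometrically distributed with parameter $1/q$) and the returned pair is distributed exactly as $X$ conditioned on $A$.

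For part~a) I would simply condition a uniform law on a subset. If $X$ is uniform over $R_q^2$, every pair occurs with probability $q^{-2N}$, and for any $(a,b)\in R_{q,0}\times R_q$,
\[
P[X_0=(a,b)]=P[X=(a,b)\mid A]=\frac{q^{-2N}}{P[A]}=\frac{q^{-2N}}{q^{-1}}=q^{-(2N-1)}.
\]
Since $|R_{q,0}\times R_q|=q^{N-1}\cdot q^{N}=q^{2N-1}$, this probability is the reciprocal of the cardinality of the target space, so $X_0$ is uniform over $R_{q,0}\times R_q$, as claimed.

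For part~b) the key observation is that in the PLWE distribution the first coordinate $a$ and the error $e$ are sampled \emph{independently}, while the acceptance event $A$ involves only $a$. I would write a sample of $X$ as $(a,\,as+e)$ with $a$ uniform over $R_q$ and $e\sim\chi_{R_q}$ independent of $a$. Conditioning on $A$ makes $a$ uniform over $R_{q,0}$ (a uniform variable conditioned on landing in a subset is uniform on that subset), and, by independence, leaves the law of $e$ unchanged, namely $\chi_{R_q}$. Hence $X_0$ first draws $a$ uniformly from $R_{q,0}$, then $e$ from $\chi_{R_q}$, and outputs $(a,as+e)$; this is precisely an $R_{q,0}\times R_q$-valued PLWE distribution with the same secret $s$ and error $\chi_{R_q}$.

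The only point requiring genuine care is this independence step in part~b): one must be sure that rejecting on the value of $a$ does not bias the error term. This is exactly what the independence of $a$ and $e$ guarantees, since the acceptance event is determined by $a$ alone; had the acceptance condition involved $b$ (equivalently $e$), the conclusion would fail. Everything else reduces to the elementary fact that conditioning a distribution on a positive-probability event determined by one coordinate preserves both the uniformity and the independence of the untouched components.
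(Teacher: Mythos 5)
Your proof is correct and follows essentially the same route as the paper: the paper derives the rejection-sampling identity $P[X_0=v]=qP[X=v]$ explicitly by summing the geometric series over loop iterations, whereas you invoke it directly as conditioning on the acceptance event $A=\{a(x)\in R_{q,0}\}$ of probability $q^{-1}$, and then both arguments split into the same two cases (uniform conditioned on a subset is uniform; independence of $a$ and $e$ keeps the error law intact). Your explicit remark that the argument would fail if the acceptance condition involved $b$ is a nice touch, but the substance matches the paper's proof.
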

\begin{proof}
For each $i\geq 1$, let $B_i$ denote the event that $X^{(i)}\notin R_{q,0} \times
R_q$, where $X^{(i)}$ is the outcome of the $i$-th running of the variable $X$.
Notice that by independence, the probabilities $P[B_i]$ are equal for any $i$.
Likewise, for each $v=(a(x),b(x))\in R_q^2$, denote by $C_{i,v}$ the event that
$X^{(i)}=v$. Again, by independence, the probabilities $P[C_{i,v}]$ are equal to
$P[X=v]$ for any $i$. We can write:
\begin{equation}
P[X_0=v]=P[C_{1,v}]+\sum_{i=2}^{\infty}P[B_1 \cap\cdots\cap B_{i-1}\cap C_{i,v}].
\label{finaleq}
\end{equation}
Taking into account that $P[X^{(i)}\in R_{q,0}\times R_q]=q^{-1}$, the right hand
side of Equation~\eqref{finaleq} equals
\begin{equation}
P[X_0=v]=P[X=v]\sum_{i=0}^{\infty}(1-q^{-1})^{i} = qP[X=v].
\label{auxfinaleq}
\end{equation}
Hence, if $X$ is uniform, then for any $i\geq 1$ we have that $P[X=v]=q^{-2N}$
and hence Equation~\ref{auxfinaleq} equals $q^{1-2N}$, thus $X_0$ is uniform.

Assume now that $X$ is a PLWE distribution. Any $v=(a(x),b(x))$ sampled from
$X$ has $a(x)$ sampled from an $R_q$-valued uniform distribution and
$b(x)=a(x)s(x)+e(x)$, where $e(x)$ is sampled from $\chi_{R_q}$, an $R_q$-valued
discrete truncated Gaussian variable. Taking into account that for any $i\geq 1$, it
holds that
\[
P[C_{i,v}]=P[X=v]=P[U=a(x)\cap \chi_{R_q}=e(x)],
\]
Equation~\eqref{finaleq} reads now
\begin{multline}
P[X_0=v] = qP[X=v] = qP[U=a(x)]P[\chi_{R_q}=e(x)] \\
= q^{1-N}P[\chi_{R_q}=e(x)],
\label{eqfinal1}
\end{multline}
so that actually $X_0$ is a PLWE distribution with values on $R_{q,0}\times
R_q$.
\end{proof}

\begin{rem}
Within $X_0$, the variable $count$ stores the expected number of times one has
to run $X$ before succeeding. Since this random variable is distributed as a
geometric random variable of first kind with success probability $q^{-1}$, the
expected number of times is just its mean $E[count]=q=\mathcal{O}(N^2)$ (see
for instance \cite[Appendix A]{BP:2015:MAI}).
\end{rem}

We are now almost ready to present a general attack, where an $R_q$-valued random
distribution $X$ (an oracle), which can be either uniform or PLWE, is provided
to the attacker. The goal of this attacker is to distinguish which one out of the
two it actually is. Before proceeding, we need the help of a pair of parameters.

First, we set an upper bound $l=\mathcal{O}(p(N))$ for the maximal
number of samples to generate from our oracle, where $p(x)$ is a polynomial of
suitable small degree and we keep only those $(a_i(x),b_i(x))\in R_{q,0}\times
R_q$ from the samples we generate.

Notice that for a given sample in $R_q^2$, the probability that it actually
belongs to $R_{q,0}\times R_q$ is $1/q$, hence, the probability that for
$l$ samples in $R_q^2$, at least $k\leq l$ of them belong to $R_{q,0}\times R_q$
can be approximated (if $l/q>5$) as
\begin{equation}
P[B(l,q^{-1})\geq k]\cong
P\left[
\mathcal{N}\left(lq^{-1},\sqrt{lq^{-1}(1-q^{-1})}\right)\geq k
\right],
\label{prob1}
\end{equation}
where $B(l,q^{-1})$ stands for a binomial distribution of parameters $l$ and
$q^{-1}$ and $\mathcal{N}\left(lq^{-1},\sqrt{ lq^{-1}(1-q^{-1})}\right)$ is a
Gaussian distribution of mean $lq^{-1}$ and standard deviation
$\sqrt{lq^{-1}(1-q^{-1})}$. But the right hand side of Equation~\eqref{prob1} can
be written as
\[
P\left[Z\geq \frac{k-lq^{-1}}{\sqrt{lq^{-1}(1-q^{-1})}}\right],
\]
where $Z$ is a standard Gaussian distribution. Hence, the probability that, out
from $l$ samples in $R_q^2$ we obtain $k$ of them from $R_{q,0}\times R_q$ is
about $1/2$ if $k\approx lq^{-1}>5$.

Next, we must choose $k$ so that $1-q\left(|\Sigma|/q\right)^k\geq
\theta$, where $\theta$ is our desired success probability threshold.

We present next Algorithm~\ref{alg3}, which is our main result: a general
decision attack over an unknown $R_q$-valued random distribution $X$ which,
as stated above, can be either uniform or PLWE. Remark that
Algorithm~\ref{alg3} uses Algorithm~\ref{alg4} and Algorithm~\ref{alg2}
as subroutines.

\begin{figure}[ht]
\centering
\begin{tabular}[c]{ll}
\hline
\textbf{Input:} & A distribution $X$ over $R_q^2$ which can be either uniform or
PLWE \\
                & A pair of values $(k,l)\in\mathbb{N}$ \\
\textbf{Output:} & \textbf{UNIFORM}, \\
              & or \textbf{PLWE}, \\
              & or \textbf{NOT ENOUGH SAMPLES}, \\
		 & or \textbf{FAIL} \\

\hline
\end{tabular}

\medskip

\begin{itemize}
\item \texttt{\emph{set}} $C:=\emptyset$, $count:=0$, $c:=0$
\item \texttt{\emph{do}}
	\begin{itemize}
	\item \texttt{\emph{set}} $(a(x),b(x)), c :=$ Algorithm.\ref{alg4}$(X, R_q)$
	\item $C \leftarrow C \cup\{(a(x),b(x))\}$
	\item $count \leftarrow count+c$
	\end{itemize}
\item \texttt{\emph{until }} $|C| = k$ \texttt{\emph{or}} $count>l$
\item \texttt{\emph{if }} $count\leq l$ \texttt{\emph{ then}}
		\begin{itemize}
			\item \texttt{\emph{set}} $\omega :=$ Algorithm.\ref{alg2}$(C)$
			\item \texttt{\emph{return}} $\omega$
		\end{itemize}
\item \texttt{\emph{ else}}
		\begin{itemize}
			\item \texttt{\emph{return }\textbf{FAIL}}
		\end{itemize}
\end{itemize}
\hrule
\caption{Decision attack against $R_q$-PLWE}
\label{alg3}
\end{figure}

\begin{rem}
Some words on the feasibility of our attack for explicit parameters are in
order. First, when the modulus polynomial is $\Phi_n(x)$, the $n$-th cyclotomic
polynomial, it is well known that for a prime $q$, letting $f$ be the smallest
natural number such that $q^f\equiv 1\pmod{n}$, we have a factorisation of
$\Phi_n(x)$ into $\phi(n)/f$ irreducible factors over $\mathbb{F}_q$ of degree
$f$. In particular, if $f=1$, then $\Phi_n(x)$ is totally split over
$\mathbb{F}_q$ and its roots have maximal order $n$.

More in general (see \cite{WZFY:2017:EFC} pag. 202 for details) if $n=p^k$, and
$q=1+p^Au$ with $u$ coprime to $p$ and $2\leq A<k$, denoting by $\Omega(p^A)$
the group of $p^A$-th primitive roots of unity in $\mathbb{F}_q$, we have
\begin{equation}
\Phi_{p^k}(x)=\prod_{\rho\in\Omega(p^A)}(x^{p^{k-A}}-\rho).
\label{chinese}
\end{equation}
In this case, for minimal possible order $p^2$, the attack against
$R_{q,0}\times R_q$ is extremely efficient and a small number of samples (about
6) are enough to solve the decisional version of the problem with overwhelming
probability. However, the subspace $R_{q,0}$ has a dimension about $N/2$ and the
probability that a sample $(a(x),b(x))\in R_q^2$ actually lies in $R_{q,0}\times
R_q$ is of the order of $q^{-N/2}$ making the samples for Algorithm~\ref{alg2}
extremely unlikely to occur.

However, the sufficient condition for Algorithm~\ref{alg2} to work is that
$|\Sigma|<q$ and since the look-up table must be only constructed once, for
some parameter instances where building this table is still feasible, our attack
will succeed, as proved in Proposition~\ref{pralg2}. 
\end{rem}

\begin{ex}
Consider the polynomial
$$
f(x)=x^{256}+92882x^{254}+x^{128}+92882x^{126}+x^2+92882,
$$ and $q=100003$. The irreducible factor
$x^2+92882$ divides $f(x)$, $r:=\mathrm{ord}_q(-92882)=6$. For $\sigma=0.04$
we have
\[
|\Sigma|\leq \left( 4\frac{N^*}{r}\sigma+1\right)
^r<7126<q.
\]
Hence, $\frac{|\Sigma|}{q}<0.07126$ and Proposition~\ref{pralg2} predicts that
the attack on $R_{q,0}\times R_q$ will succeed at least $81.62\%$ of times with
just $5$ samples and $99.993\%$ with $8$ samples. This example works with a root of significantly larger order as those considered in \cite{ELOS:2015:PWI}.
\end{ex}

\begin{ex}
Consider the polynomial
$$
f(x)=x^{80}+x^{78}-x^{66}-x^{64}-2x^{48}-2x^{46}+2x^{34}+2x^{32}+x^{16}+x^{14}-x^2+262
$$ 
and $q=263$. The irreducible factor $x^2+1$ divides $f(x)$ and $r:=\mathrm{ord}_q(-1)=1$. For $\sigma=0.18$
we have
\[
|\Sigma|\leq \left( 4\frac{N^*}{r}\sigma+1\right)
^r<227<263<q.
\]
Hence, $\frac{|\Sigma|}{q}<0.863$ and Proposition~\ref{pralg2} predicts that
the attack on $R_{q,0}\times R_q$ will succeed at least $96\%$ of times with 
just $60$ samples. Notice that $q=263$ is close to byte size (251, 253). These byte-size primes are used in the LAC cryptosystem. We have chosen $N=80$ to simplify the computations, while LAC requires at least $N=128$. Moreover, in LAC, the modulus polynomial is cyclotomic. However, this example shows that our attack applies to instances where the parameters are relatively close to those used in practical scenarios.
\end{ex}

\begin{ex}
Consider the polynomial
$$
f(x)=x^{32}+x^{30}+x^2+264,
$$ 
and $q=263$. The irreducible factor $x^2+1$ divides $f(x)$ and $r:=\mathrm{ord}_q(-1)=1$. For $\sigma=0.502$
we have
\[
|\Sigma|\leq \left( 4\frac{N^*}{r}\sigma+1\right)
^r<261<263<q.
\]
Hence, $\frac{|\Sigma|}{q}<0.9933$ and Proposition~\ref{pralg2} predicts that
the attack on $R_{q,0}\times R_q$ will succeed at least $87\%$ of times with  $1000$ samples. Notice that in this example, the variance can be chosen to be considerably larger than in previous examples, also along the lines of practical scenarios. 
\end{ex}

\begin{rem}
Again, in the cyclotomic case, it is very unlikely for a random couple $(n,q)$
with $q$ prime, that $\Phi_n(x)$ has an irreducible quadratic factor with zero
linear term. First, it must be $q\equiv -1\pmod{n}$ (see, for instance,
\cite{WZFY:2017:EFC} page 203 for $n=3^k$). Certainly, if $\Phi_n(x)$ factors as
a product of irreducible quadratic factors with non-zero linear term, a change
of variable $y:=x-\rho$ for suitable $\rho\in\mathbb{F}_q$ grants a quadratic
factor of the required shape of the (non-cyclotomic) polynomial
$\Phi_n(y-\rho)$. However, all our computations show that the order of $\rho$ is
$\phi(n)$ in nearly all the cases.
\end{rem}

These examples show that our attack applies preferably to
non-cyclotomic polynomials, although some caution must be adopted even in the
cyclotomic case. It would be very interesting to have a more detailed study,
specially for composite cyclotomic conductors. Beyond this, in the next section
we display several instances of our attack against non-cyclotomic parameters,
which is successful with very large probability and with a very small number of
samples from the $R_q^2$-distribution.

\section{Coding examples}
We have developed a Maple program in order to check the real applicability and
performance of our Algorithms. The code can be found at
\textsc{GitHub}%
\footnote{\url{https://github.com/raul-duran-diaz/PLWE-ZeroTraceAttack}}.
We have tested the code in Maple version 10.00 (a pretty old version), so it
should
be also executable over virtually any other higher version.
As a disclaimer, keep in mind that we made no attempt at optimising the code:
our aim was just
presenting a ``proof of concept'' of the validity and feasibility of the
algorithms contributed in this work. Depending on the available hardware and
software, the execution may appear a little bit ``sluggish''.

The code implements a number of procedures to carry out the different needed
tasks. Maple has an internal uniform sampler (called \texttt{rand}) that can be
easily adjusted to give samples uniformly over any finite set of numbers. In
particular, we use it over the base field $\mathbb{F}_q$.
We implement an approximation of discrete Gaussian based on the regular normal
Gaussian distribution provided directly by Maple.

Regarding the provided Maple sheet, we provide in particular the implementation
of Algorithm~\ref{alg3} (case $\rho=1$). To run any example, one just selects
the desired parameters (in the so-called ``main section'') and executes the
sheet right from the beginning to the end. We have retained (roughly) the same
notation for the Maple sheet as the one used along the present work, to ease the
use of the sheet.

The execution comprises the following steps:
\begin{enumerate}
\item Generating samplers for the uniform distribution (\texttt{rollq}) and for
the discrete Gaussian (\texttt{X}).
\item Obtaining a prime of the desired size such that $-1$ is not a quadratic
residue $\mathbb{F}_q$.
\item Computing an irreducible polynomial in $\mathbb{Z}[x]$ such that
it is divisible by $x^2+1$ in $\mathbb{F}_q[x]$.
\item Computing the set $\Sigma$ as a function of the parameters, following
Equation~\eqref{sigma}.
\item Selecting a number of executions (variable \texttt{ntests}) for
Algorithm~\ref{alg2}, and a number of samples for each execution (variable
\texttt{M}). Then
\begin{enumerate}
\item First, a loop is executed \texttt{ntests} times and, for each turn,
\texttt{M}
samples from the PLWE distribution are generated, and passed to Algorithm~\ref{alg2}.
If it returns anything different from a set containing only one element, the
execution is recorded as a failure.
\item Second, another loop is executed, but generating the samples from the
uniform distribution, and passing each set of samples to Algorithm~\ref{alg2}. If it
returns anything other than an empty set, the execution is recorded as a
failure.
\end{enumerate}
\end{enumerate}

We tested the implementation using two examples with different parameters that
we summarise in the following table:

\medskip

\begin{center}
\begin{tabular}[c]{lcc}
\hline
\textbf{Parameter} & \textbf{Example 1} & \textbf{Example 2} \\
\hline
$q$                           & $5003$  & $10007$ \\
$N$                           & $62$    & $102$ \\
$\sigma$                      & $8$     & $8$ \\
\texttt{ntests}               & $20$    & $20$ \\
\texttt{M}                    & $5$     & $5$ \\
Average sampling exec. time   & $\simeq 4.97$   & $\simeq 16.65$ \\
Average test exec. time       & $\simeq 0.94$   & $\simeq 3.02$ \\
\hline
\end{tabular}
\end{center}

\medskip

The executions of the two examples gave no failures for the uniform distribution,
but fails approximately $5\%$ to $10\%$ of the tests for the PLWE distribution. To
understand why, keep in mind that Algorithm~\ref{alg2} is essentially
probabilistic and requires that there exists a $g\in\mathbb{F}_q$ such that
$(\Tr(b(\alpha))-a(\alpha)g)/2 \in\Sigma$ for all samples in the set of samples
$C$.

Regarding execution times (given in seconds), the examples have been executed
and measured on our
platform, a Virtual Box configured with 1 GB of main memory running over Intel
CORE i5, at 3.4 GHz, which is rather slow. The execution time for the test and
for creating samples from both distributions, namely uniform and PLWE, are very
similar; therefore we just give one figure for the execution time and one
figure for the sampling time, which is applicable to both distributions. It is
also apparent that sampling is by far the most time-consuming operation.


\begin{thebibliography}{10}

\bibitem{helib}
{\em {HElib, an open-source homomorphic encryption library}}.
\newblock URL: \url{https://github.com/homenc/HElib}.

\bibitem{lattigo}
{\em {Lattigo: lattice-based multiparty homomorphic encryption library in Go}}.
\newblock URL: \url{https://pkg.go.dev/github.com/ldsec/lattigo/}.

\bibitem{openfhe}
{\em {OpenFHE, open-source project providing implementations of PQ-FHE
  schemes}}.
\newblock URL: \url{https://openfhe.org/}.

\bibitem{saber}
{\em {SABER, an MLWR-based KEM}}.
\newblock URL: \url{https://www.esat.kuleuven.be/cosic/pqcrypto/saber/}.

\bibitem{BR:2020:TPU}
{\sc O.~Bernard and A.~Roux{-}Langlois}, {\em {Twisted-PHS: Using the Product
  Formula to Solve Approx-SVP in Ideal Lattices}}, in Advances in Cryptology -
  {ASIACRYPT} 2020 - 26th International Conference on the Theory and
  Application of Cryptology and Information Security, Daejeon, South Korea,
  December 7-11, 2020, Proceedings, Part {II}, S.~Moriai and H.~Wang, eds.,
  vol.~12492 of Lecture Notes in Computer Science, Springer, 2020,
  pp.~349--380.

\bibitem{BDM:2025:GARB}
{\sc I.~Blanco-Chac\'on, R.~Dur\'an-D\'iaz, and R.~Mart\'in S\'anchez-Ledesma},
  {\em {A Generalized Approach to Root-based Attacks against PLWE}},
  Cryptography and Communications. Special Issue: Quantum-Resistant
  Cryptography (QuRCry),  (2025), pp.~1--45.
\newblock Published on-line.

\bibitem{BDM:GAR:2025}
{\sc I.~Blanco~Chac\'on, R.~Dur\'an~D\'{\i}az, and R.~Mart\'{\i}n
  S\'anchez-Ledesma}, {\em {A Generalized Approach to Root-based Attacks
  against {PLWE}}}.
\newblock Cryptology {ePrint} Archive, Paper 2025/1210, 2025.

\bibitem{BDNB:2023:TBC}
{\sc I.~Blanco-Chac\'on, R.~Dur\'an-D\'iaz, R.~Y. Njah~Nchiwo, and
  B.~Barbero-Lucas}, {\em {Trace-based cryptanalysis of cyclotomic
  $R_{q,0}\times R_q$-PLWE for the non-split case}}, {Communications in
  Mathematics}, 31 (2023), pp.~115--135.

\bibitem{BP:2015:MAI}
{\sc A.~Blasco~Lorenzo and S.~P\'erez~D\'iaz}, {\em Modelos aleatorios en
  Ingenier\'ia}, Ediciones Pa\-ra\-nin\-fo, Madrid, Spain, 2015.

\bibitem{CLS:2017:ASR}
{\sc H.~Chen, K.~Lauter, and K.~E. Stange}, {\em {Attacks on the Search RLWE
  Problem with Small Errors}}, SIAM Journal on Applied Algebra and Geometry, 1
  (2017), pp.~665--682.

\bibitem{DPW:2019:SVF}
{\sc L.~Ducas, M.~Plan{\c{c}}on, and B.~Wesolowski}, {\em {On the Shortness of
  Vectors to Be Found by the Ideal-SVP Quantum Algorithm}}, in Advances in
  Cryptology -- CRYPTO 2019, A.~Boldyreva and D.~Micciancio, eds., vol.~11692,
  Cham, 2019, Springer International Publishing, pp.~322--351.

\bibitem{EHL:2014:WIP}
{\sc K.~Eisentr{\"a}ger, S.~Hallgren, and K.~Lauter}, {\em {Weak Instances of
  PLWE}}, in Selected Areas in Cryptography -- SAC 2014, A.~Joux and
  A.~Youssef, eds., vol.~8781 of Lecture Notes in Computer Science, Cham, 2014,
  Springer International Publishing, pp.~183--194.

\bibitem{ELOS:2015:PWI}
{\sc Y.~Elias, K.~E. Lauter, E.~Ozman, and K.~E. Stange}, {\em {Provably Weak
  Instances of Ring-LWE}}, in Advances in Cryptology -- CRYPTO 2015, R.~Gennaro
  and M.~Robshaw, eds., no.~9215 in Lecture Notes in Computer Science, Berlin,
  Heidelberg, 2015, Springer Berlin Heidelberg, pp.~63--92.

\bibitem{ELOS:2016:RCN}
\leavevmode\vrule height 2pt depth -1.6pt width 23pt, {\em {Ring-{LWE}
  Cryptography for the Number Theorist}}, in Directions in Number Theory, E.~E.
  Eischen, L.~Long, R.~Pries, and K.~E. Stange, eds., vol.~3 of Association for
  Women in Mathematics Series, Cham, 2016, Springer International Publishing,
  pp.~271--290.

\bibitem{GMPW:2020:IDG}
{\sc N.~Genise, D.~Micciancio, C.~Peikert, and M.~Walter}, {\em {Improved
  Discrete Gaussian and Subgaussian Analysis for Lattice Cryptography }}, in
  Public-Key Cryptography --- PKC 2020, 23rd IACR International Conference on
  Practice and Theory of Public-Key Cryptography, Edinburgh, UK, May 4--7,
  2020, Proceedings, Part I, A.~Kiayias, M.~Kohlweiss, P.~Wallden, and
  V.~Zikas, eds., Lecture Notes in Computer Science, 2020.

\bibitem{LPR:2013:ILL}
{\sc V.~Lyubashevsky, C.~Peikert, and O.~Regev}, {\em {On Ideal Lattices and
  Learning with Errors over Rings}}, Journal of the ACM, 60 (2013),
  pp.~43:1--43:35.

\bibitem{FIPS:2024:203}
{\sc NIST}, {\em {Module-Lattice-Based Key-Encapsulation Mechanism Standard}},
  Tech. Rep. 203, National Institute of Standards and Technologies, 2024.

\bibitem{PTGG:2021:RMR}
{\sc A.~Pedrouzo-Ulloa, J.~R. Troncoso-Pastoriza, N.~Gama, M.~Georgieva, and
  F.~P\'erez-Gonz\'alez}, {\em {Revisiting Multivariate Ring Learning with
  Errors and Its Applications on Lattice-Based Cryptography}}, Mathematics, 9
  (2021), pp.~1--42.

\bibitem{Peikert:2016:HIR}
{\sc C.~Peikert}, {\em {How (Not) to Instantiate {R}ing-{LWE}}}, in Security
  and Cryptography for Networks, V.~Zikas and R.~De~Prisco, eds., vol.~9841 of
  Lecture Notes in Computer Science, Cham, 2016, Springer International
  Publishing, pp.~411--430.

\bibitem{PP:2019:ASL}
{\sc C.~Peikert and Z.~Pepin}, {\em {Algebraically Structured LWE, Revisited}}.
\newblock Cryptology ePrint Archive, Report 2019/878, 2019.
\newblock URL: \url{https://ia.cr/2019/878}.

\bibitem{RSW:2018:RPP}
{\sc M.~Rosca, D.~Stehl{\'e}, and A.~Wallet}, {\em {On the Ring-LWE and
  Polynomial-LWE Problems}}, in Advances in Cryptology -- EUROCRYPT 2018, J.~B.
  Nielsen and V.~Rijmen, eds., Cham, 2018, Springer International Publishing,
  pp.~146--173.

\bibitem{Stange:2021:AAS}
{\sc K.~E. Stange}, {\em {Algebraic Aspects of Solving Ring-LWE, Including
  Ring-Based Improvements in the Blum--Kalai--Wasserman Algorithm}}, SIAM
  Journal on Applied Algebra and Geometry, 5 (2021), pp.~366--387.

\bibitem{SSTX:2009:EPK}
{\sc D.~Stehl{\'e}, R.~Steinfeld, K.~Tanaka, and K.~Xagawa}, {\em {Efficient
  Public Key Encryption Based on Ideal Lattices}}, in Advances in Cryptology --
  ASIACRYPT 2009, M.~Matsui, ed., Berlin, Heidelberg, 2009, Springer Berlin
  Heidelberg, pp.~617--635.

\bibitem{WZFY:2017:EFC}
{\sc H.~Wu, L.~Zhu, R.~Feng, and S.~Yang}, {\em {Explicit factorizations of
  cyclotomic polynomials over finite fields}}, Designs, Codes and Cryptography,
  83 (2017), pp.~197--217.

\end{thebibliography}
\end{document}